\spnewtheorem{lem}{Lemma}[lemma]{\bfseries}{\itshape}
\newcommand{\old}[1]{{}}
\newcommand{\raus}[1]{{}}
\newcommand{\packedseq}{\ensuremath{\mathcal{P}}\xspace} 
\newcommand{\packedseqOf}[1]{\ensuremath{\mathcal{P}(#1)}\xspace}
\newcommand{\maxArea}{\ensuremath{\delta}\xspace}
\newcommand{\seq}{\ensuremath{\sigma}\xspace} 
\newcommand{\unitsq}{\ensuremath{U}\xspace} 
\newcommand{\headOf}[1]{\textit{head(#1)}\xspace}
\newcommand{\shelfEnd}[1]{\textit{end}(#1)\xspace}
\newcommand{\nextFitShelf}[1]{NFS(#1)}
\newcommand{\norm}[1]{\ensuremath{\lVert#1\rVert}\xspace}
\newcommand{\newshelf}{\ensuremath{\mathcal{S}}\xspace}
\newcommand{\newbuffer}{\ensuremath{\mathcal{B}}\xspace}
\newcommand{\newendbuffer}{\ensuremath{\mathcal{E}}\xspace}
\newcommand{\newinitial}{\ensuremath{\mathcal{I}}\xspace}
\newcommand{\newvertical}{\ensuremath{\mathcal{V}}\xspace}
\newcommand{\overlapshelf}{\ensuremath{\mathcal{L}}\xspace}
\newcommand{\Ebuffershelf}{\ensuremath{\bar{E}}\xspace}
\newcommand{\initialbuffer}[1]{\ensuremath{\mathcal{I}_{#1}}\xspace}
\newcommand{\eps}{\ensuremath{\varepsilon}\xspace}
\newcommand{\assigned}[1]{\textit{assigned}(#1)\xspace}
\newcommand{\occupied}[1]{\textit{occupied}(#1)\xspace}
\newcommand{\totalAreaOf}[1]{\ensuremath{\widetilde{\mathcal{A}}(#1)}\xspace}
\newcommand{\usedSection}[1]{\textit{usedSection}(\ensuremath{#1})\xspace}
\newcommand{\usLength}[1]{\ensuremath{{\ell_{#1}}}\xspace}
\newcommand{\extraOf}[1]{\textit{extra}(\ensuremath{#1})\xspace}
\newcommand{\mainPackingArea}{\ensuremath{M}\xspace}
\newcommand{\vBase}[2]{\ensuremath{V^{#1}_{#2}}\xspace}
\newcommand{\vOpen}{\ensuremath{V_{\text{open}}}\xspace}
\newcommand{\vClosed}{\ensuremath{V_{\text{closed}}}\xspace}
\newcommand{\vHeadB}{\ensuremath{V^{\text{B}}_{\text{head}}}\xspace}
\newcommand{\vHeadE}{\ensuremath{V^{\text{E}}_{\text{head}}}\xspace}
\newcommand{\vHeadEi}{\ensuremath{V^{\text{E}_i}	_{\text{head}}}\xspace}
\newcommand{\vHead}{\ensuremath{V_{\text{head}}}\xspace}
\newcommand{\upperhalf}{\ensuremath{\mathcal{H}_u}\xspace}
\newcommand{\lowerhalf}{\ensuremath{\mathcal{H}_\ell}\xspace}
\newcommand{\maxBrick}{\ensuremath{B_{max}}\xspace}
\newcounter{mycount}
\journalname{Algorithmica}
\begin{document}

\title{Online Square-into-Square Packing
	\thanks{A preliminary extended abstract appears in the Proceedings of the 16th International Workshop APPROX 2013~\cite{fh-ossp-13}. Partially funded by DFG grant FE407/17-1 within the DFG Research Unit 1800, ``Controlling Concurrent Change''.}
	}
\author{ S\'andor~P.~Fekete \and
	    Hella-Franziska~Hoffmann
}

\institute{
	 S\'andor P.~Fekete\at
              Department of Computer Science\\
              TU Braunschweig\\
              38106 Braunschweig\\
              Germany\\
              \email{s.fekete@tu-bs.de}
           \and
           Hella-Franziska Hoffmann \at
           David R. Cheriton School of Computer Science\\
	University of Waterloo\\
	200 University Avenue West\\
	Waterloo, ON Canada N2L 3G1\\
	Tel.: +1-519-888-4567 ext. 33440\\
	\email{hrhoffmann@uwaterloo.ca}
}

\date{}

%
\maketitle
\begin{abstract}
In 1967, Moon and Moser proved a tight bound on the critical
density of squares in squares: any set of squares with a 
total area of at most 1/2 can be packed into a unit square, which is tight. 
The proof requires full knowledge of the set,
as the algorithmic solution consists in sorting the objects by decreasing size,
and packing them greedily into shelves. Since then, the online version
of the problem has remained open; the best upper bound is still 1/2, while the currently best lower bound is 1/3, due to Han et al. (2008).
In this paper, we present a new lower bound of 11/32, based on a 
dynamic shelf allocation scheme, which may be interesting in itself.

We also give results for the closely related problem in which the size of the
square container is not fixed, but must be dynamically increased in order
to accommodate online sequences of objects. For this variant, we establish an upper
bound of 3/7 for the critical density, and a lower bound of 1/8.
When aiming for accommodating an online sequence of squares, this 
corresponds to a $2.82\ldots$-competitive
method for minimizing the required container size, 
and a lower bound of $1.33\ldots$ for the achievable factor.
\end{abstract}

\keywords{Packing \and online problems \and packing squares \and critical density.}

\section{Introduction}
Packing is one of the most natural and common optimization problems.
Given a set $\cal O$ of objects and a container $E$, find a placement
of all objects into $E$, such that no two overlap.
Packing problems are highly relevant in many practical applications,
both in geometric and abstract settings. Simple one-dimensional
variants (such as the {\sc Partition} case with two containers, or
the {\sc Knapsack} problem of a largest packable subset)
are NP-hard. Additional difficulties occur
in higher dimensions: as Leung et al.~\cite{ltwyc90} showed,
it is NP-hard even to check whether a given set of squares fits into a unit-square container.

When dealing with an important, but difficult optimization problem, it is crucial
to develop a wide array of efficient methods for distinguishing
feasible instances from the infeasible ones.
In one dimension, a trivial necessary and sufficient criterion is the
total size of the objects in comparison to the container. This makes it 
natural to consider a similar approach for the two-dimensional version:
{\em What is the largest number $\delta$, such that any family of squares
with area at most $\delta$ can be packed into a unit square?}
An upper bound of $\delta\leq 1/2$ is trivial: two squares
of size $1/2+\varepsilon$ cannot be packed.
As Moon and Moser showed in 1967~\cite{mm67}, $\delta=1/2$ is the correct critical
bound: sort the objects by decreasing size, and greedily pack them into a vertical
stack of one-dimensional ``shelves'',
i.e., horizontal subpackings whose height is defined by the largest object.

This approach cannot be used when the set of objects is not known a priori,
i.e., in an online setting. It is not hard to see that a pure shelf-packing approach
can be arbitrarily bad. However, other, more sophisticated approaches were
able to prove lower bounds for $\delta$: the current best bound
(established by Han et al.~\cite{hiz08}) is based on a relatively natural
recursive approach and shows that $\delta\geq 1/3$.

Furthermore, it may not always be desirable (or possible) to assume
a fixed container: 
the total area of objects may remain small, so 
a fixed large, square container may be wasteful.
Thus, it is logical to consider the size of the container itself 
as an optimization parameter. Moreover, considering a possibly {\em larger} container
reflects the natural optimization scenario 
in which the full set of objects {\em must} be accommodated, possibly by
paying a price in the container size. From this perspective, 
$1/\sqrt{\delta}$ yields a competitive factor for the 
minimum size of the container,
which is maintained at any stage of the process.
This perspective has been studied extensively for the case 
of an infinite strip, but not for an adjustable square.

\subsection{Our Results}
We establish a new best lower bound of $\delta\geq 11/32$ 
for packing an online sequence of squares into a fixed square container,
breaking through the threshold of $1/3$ that
is natural for simple recursive approaches based on brick-like structures.
Our result is based on a two-dimensional system of multi-directional
shelves and buffers, which are dynamically allocated and updated.
We believe that this approach is interesting in itself, as it may 
not only yield worst-case estimates, but also provide a
possible avenue for further improvements, and be useful as an algorithmic
method. 

As a second set of results, we establish the first upper and 
lower bounds for a square container, which is dynamically enlarged, but must maintain
its quadratic shape. In particular, we show 
that there is an upper bound of $\delta\leq 3/7<1/2$ for the critical density,
and a lower bound of $1/8\leq\delta$; when focusing on the minimum
size of a square container, these results correspond to a $2.82\ldots$-competitive
factor, and a lower bound of $1.33\ldots$ for the achievable factor by any 
deterministic online algorithm.

\subsection{Related Work}
Two- and higher-dimensional problems of packing rectangular objects into rectangular containers
have received a considerable amount of attention; see Harren's Ph.D.\ thesis~\cite{phdharren} for a relatively recent
survey. Many of the involved ideas are closely or loosely related to some of the ideas
of our paper. We summarize many of the related papers, with particular attention dedicated to
those that are of direct significance for our approach.

\paragraph{Offline Packing of Squares.}
One of the earliest considered packing variants is the problem of finding
a dense square packing for a rectangular container.
In 1966 Moser~\cite{m66} first stated the question as follows:
\begin{center}
``What is the smallest number $A$ such that any family of objects with total
area at most $1$ can be packed into a rectangle of area $A$?''
\end{center}
The offline case has been widely studied since 1966; 
there is a long list of results for packing squares into a rectangle. 
Already in 1967, Moon and Moser~\cite{mm67} gave the first bounds for $A$:
any set of squares with total area at most
$1$ can be packed into a square with side lengths $\sqrt2$, which shows
$A \leq 2$, and thus $\delta\geq 1/2$;
they also proved $A \geq 1.2$.
Meir and Moser \cite{mm68} showed that
any family of squares each with side lengths $\leq x$ and total area $A$ can
be packed into a rectangle of width $w$ and height $h$, if $w,h \geq x$ and
$x^2 + (w-x)(h-x) \geq A$; 
they also proved that any family of $k$-dimensional cubes with side
lengths $\leq x$ and total volume $V$ can be packed into a rectangular
parallelepiped with edge lengths $a_1, \ldots, a_k$ if $a_i \geq x$ for
$i=1,\ldots ,k$ and $x^k + \prod_{i=1}^k{(a_i-x)} \geq V$.
Kleitman and Krieger improved the upper bound on $A$ to $\sqrt3 \approx
1.733$~\cite{kk70} and to $4/\sqrt6 \approx 1.633$~\cite{kk75} by
showing that any finite family of squares with total area $1$ can be packed
into a rectangle of size $\sqrt{2}\times 2/\sqrt{3}$.
Novotn{\'y} further improved the bounds to $1.244 \approx (2 + \sqrt{3})/3 \leq A < 1.53$
in 1995~\cite{n95} and 1996~\cite{n96}. The current best known upper bound
of $1.3999$ is due to Hougardy~\cite{h11}.
There is also a considerable number of other related work on offline packing squares, cubes, or hypercubes;
see~\cite{ck-soda04,js-ptas08,h09} for prominent examples.

\paragraph{Online Packing of Squares into a Square.}
In 1997, Januszewski and Lassak~\cite{jl97} studied the online version of the
dense packing problem. In particular, they proved that for $d\geq5$, every online
sequence
of $d$-dimensional cubes of total volume $2(\frac{1}{2})^d$ can be packed
into the unit cube. For lower dimensions, they established online methods
for packing (hyper-) cubes and squares with a total volume of at most $\frac{3}{2}(\frac{1}{2})^d$
and $\frac{5}{16}$ for $d\in\{3,4\}$ and $d=2$, respectively. The results are
achieved by performing
an online algorithm that subsequently divides the unit square into rectangles with
aspect ratio $\sqrt2$. In the following, we call these
rectangles {\em bricks}.
The best known lower bound of $2(\frac{1}{2})^d$ for any $d \geq 1$ was presented
by Meir and Moser \cite{mm68}.

Using a variant of the brick algorithm, Han et al.~\cite{hiz08} extended the
result to packing a 2-dimensional sequence with total area $\leq 1/3$ into
the unit square.

A different kind of online square packing was considered by 
Fekete et al.~\cite{fks-osp-09,fks-ospg-14}. The container is an unbounded strip,
into which objects enter from above in a Tetris-like fashion; any new
object must come to rest on a previously placed object, and the 
path to its final destination must be collision-free. Their best 
competitive factor is $34/13\approx 2.6154\ldots$, which 
corresponds to an (asymptotic) packing density of $13/34\approx 0.38\ldots$.

\paragraph{Other Online Packing of Squares.}
There are various ways to generalize online packing of squares; see Epstein and van Stee~\cite{es-soda04,es05,es07} for online bin packing variants
in two and higher dimensions. In this context, also see parts of Zhang et al.~\cite{zcchtt10}.

\paragraph{Online Packing of Rectangles.}
A natural generalization of online packing of squares is online packing of rectangles,
which have also received a serious amount of attention. Most notably, online strip packing
has been considered; for prominent examples, see Azar and Epstein~\cite{ae-strip97}, who employ 
shelf packing, and Epstein and van Stee~\cite{es-soda04}.

\paragraph{Packing into One Container.}
Offline packing of rectangles into a unit square or rectangle has also been considered
in different variants; for examples, see \cite{fgjs05}, as well as \cite{jz-profit07}.
Particularly interesting for methods for online packing into a single container may be the work by Bansal et
al.~\cite{bcj-struct-09}, who show that for any complicated packing of rectangular items into a rectangular container,
there is a simpler packing with almost the same value of items.

\paragraph{Two-Dimensional Bin Packing.}
Packing squares or rectangles into a minimum number of square boxes amounts to two-dimensional
bin packing, which is closely related to packing into a single container. Arguably, bin packing is the 
two-dimensional packing problem that has received the most attention from an algorithmic perspective. 
See~\cite{harmony,bs-soda04,ck-soda04,clm-bin05,bls-focs05,bck-bpmd-06,bcs-bin06,c-stages08,bcs-namsc-09,zcchtt10,jp-soda13,hjp-bin13,bk-soda14}
for particularly relevant work. Most of these papers consider offline problems, with notable exceptions already cited above.

\paragraph{Resource Augmentation.}
Our study of online packing into a dynamic square container can be interpreted as a variant of 
resource augmentation, which has been studied in the context of two-dimensional packing by several other authors, including
\cite{clm-small06,c-aug-06,fgj-aug08,js-aug09}.

\paragraph{Strip Packing.}
Dynamically expanding a square container (as presented in Section~\ref{sec:dynamicContainer}) 
can be seen as a variation of increasing a container along only one dimension, i.e., packing into a strip.
Two- and higher-dimensional offline strip packing has been studied intensively, see~\cite{kr-focs96,js-strip-05,js-mfcs07,bhi-sicomp13,hjp-strip14}
for prominent examples.

\section{Packing into a Fixed Container}\label{sec:recShelfAlg}
As noted in the introduction, it is relatively easy to achieve a dense 
packing of squares in an offline setting: sorting the items by decreasing size makes sure
that a shelf-packing approach places squares of similar size together, so the
loss of density remains relatively small. This line of attack is not available in an
online setting; indeed, it is not hard to see that a brute-force shelf-packing
method can be arbitrarily bad if the sequence of items consists of a limited number
of medium-sized squares, followed by a large number of small ones.
Allocating different size classes to different horizontal shelves is not a
remedy, as we may end up saving space for squares that never appear, and 
run out of space for smaller squares in the process; on the other hand, fragmenting
the space for large squares by placing small ones into it may be fatal when a large
one does appear after all.

Previous approaches (in particular, the brick-packing algorithm) have 
side-stepped these difficulties by using a recursive subdivision scheme.
While this leads to relatively good performance guarantees (such as the previous 
record of 1/3 for a competitive ratio), it seems impossible to tighten
the lower bound; in particular, 1/3 seems to be a natural upper bound for this relatively
direct approach.
Thus, making progress on this natural and classical algorithmic problem
requires less elegant, but more powerful tools.

In the following we present a different approach for overcoming the crucial impediment of mixed
square sizes, and breaking through the barrier of 1/3. 
Our \emph{Recursive Shelf Algorithm} aims at subdividing the set of squares into
different size classes called {\em large}, {\em medium} and {\em small}, which are
packed into pre-reserved shelves. The crucial challenge is to dynamically update 
regions when one of them gets filled up before the other ones do;
in particular, we have to protect against the arrival of one large square,
several medium-sized squares, or many small ones.
To this end, we combine a number of new techniques:

\begin{itemize}
\item Initially, we assign carefully chosen horizontal strips for shelf-packing each size class.
\item We provide rules for dynamically updating shelf space when required by the sequence
of items. In particular, we accommodate a larger set of smaller squares by 
inserting additional {\it vertical} shelves into the space for larger squares whenever necessary.
\item In order to achieve the desired overall density, we maintain a set of buffers for 
overflowing strips. These buffers can be used for different size classes, 
depending on the sequence of squares.
\end{itemize}

With the help of these techniques, and a careful analysis, we are able to
establish $\maxArea \geq 11/32$. It should be noted that the development
of this new technique may be more significant than the numerical improvement
of the density bound: we are convinced that tightening the remaining gap
towards the elusive 1/2 will be possible by an extended (but more complicated) 
case analysis. 

The remainder of this section is organized as follows. In Section~\ref{sec:algOverview} we give
an overview of the algorithm. Section~\ref{sec:large} sketches
the placement of large objects, while
Section~\ref{sec:ceilingPacking} describes
the packing created with medium-sized squares. 
In Section~\ref{sec:generalShelfConcept} we describe the general
concept of shelf-packing that is used for the packing of
small squares discussed in Section~\ref{sec:packSmall}.
The overall performance is analyzed in Section~\ref{sec:algAnalysis}.

\subsection{Algorithm Overview}\label{sec:algOverview}
We construct a shelf-based packing in the unit square by
packing \emph{small}, \emph{medium} and \emph{large squares} separately.
We stop the Recursive Shelf Algorithm when
the packings of two different subalgorithms would overlap.
As it turns out, this can only happen when the total area 
of the given squares is greater than $11/32$; details are provided in the
``Combined Analysis'' of Section~\ref{sec:algAnalysis}, after describing the approach for
individual size classes.

In the following, we will subdivide the set of possible squares into
subsets, according to their size:
We let $H_k$ denote the height class belonging
to the interval $(2^{-(k+1)},2^{-k}]$. In particular, we call
all squares in $H_0$ {\em large}, all squares in $H_1$ {\em medium},
and all other squares (in $H_{\geq 2}$) {\em small}.

\subsection{Packing Large Squares}\label{sec:large}
The simplest packing subroutine is applied to large squares, i.e.,
of size greater than $1/2$.
We pack a square $Q_0 \in H_0$ into the top right corner of the unit square \unitsq.
Clearly, only one large 
square can be part of a sequence with total area $\leq 11/32$.
Hence, this single location for the squares in $H_0$ is sufficient.

\subsection{Packing Medium Squares}\label{sec:ceilingPacking}
We pack all medium squares (those with side lengths in $(1/4,1/2]$) separately;
note that there can be at most five of these squares, otherwise their total area is
already bigger than $3/8>11/32$.

\begin{figure}[t]
  \centering
  \subfigure[]{
    \includegraphics[width = 0.45\textwidth]{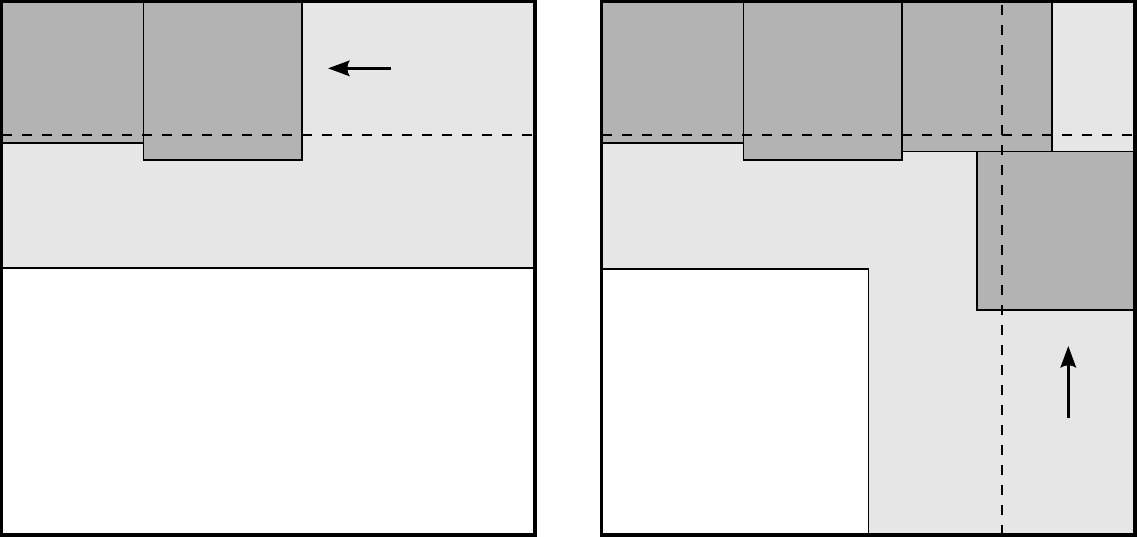}
    \label{fig:ceiling1}
  }
  \hskip2mm
  \subfigure[]{
    \def\svgwidth{.45\textwidth}
    \import{./}{ceiling2.tex}
  }
\caption{Packing medium squares (Subsection~\ref{sec:ceilingPacking}).
	(a): The L-shaped packing created with medium squares.
	 (b) Density consideration: The Ceiling Packing Algorithm
    packs at least as much as the area of the gray region ($R$) shown on the left.
	If a portion of $R$ remains uncovered by squares, a larger portion of $\unitsq\setminus R$ must be covered.
	}  
	\label{fig:ceiling2}
\end{figure}

We start with packing the $H_1$-squares from left to right coinciding with
the top of the unit square \unitsq. If a square would cross the right
boundary of \unitsq, we continue by placing the following squares from top
to bottom coinciding with the right boundary; see Fig.~\ref{fig:ceiling1}.
 
We call the corresponding subroutine the \emph{Ceiling Packing Algorithm}.
Without interference of other height classes, the algorithm succeeds in packing
any sequence of $H_1$-squares with total area $\leq 3/8$.
\begin{theorem}\label{thm:ceiling}
    The Ceiling Packing Subroutine packs any sequence of medium squares
    with total area at most $3/8$ into the unit square. 
\end{theorem}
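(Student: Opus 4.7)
The plan is to prove that when the Ceiling Packing Algorithm runs in isolation on any sequence of medium squares with total area at most $3/8$, every square is placed entirely inside \unitsq. Let $s_1, s_2, \ldots, s_n$ denote the arrival order, each with $s_i \in (1/4, 1/2]$. Two immediate counts frame the geometry: since every medium square has area strictly greater than $1/16$, the budget $3/8$ forces $n \leq 5$; and since every side strictly exceeds $1/4$, the top row can accommodate at most three squares. Writing $j$ for the number of squares actually placed on top, we therefore have $j \leq 3$ and $s_1 + \cdots + s_j \leq 1$, so the top row lies in \unitsq by construction. Setting $m := \max_{k \leq j} s_k$, the first right-column square $s_{j+1}$ is placed just below the top row, with top edge at $y = 1 - m$ and right edge at $x = 1$; subsequent right-column squares are stacked directly beneath. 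The entire problem therefore reduces to verifying the single inequality
\[
m + \sum_{i = j+1}^{n} s_i \;\leq\; 1. \qquad (\star)
\]

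When the right column contains at most one square, $(\star)$ is immediate from $m, s_{j+1} \leq 1/2$. For two or more right-column squares, I would first rule out the configuration $(n, j) = (5, 2)$: in that case $s_1 + s_2 + s_3 > 1$ forces $s_1^2 + s_2^2 + s_3^2 \geq (s_1 + s_2 + s_3)^2/3 > 1/3$ by the power-mean inequality, and together with $s_4^2 + s_5^2 > 1/8$ the total area would exceed $11/24 > 3/8$, contradicting the area bound. This leaves only $(n, j) \in \{(4, 2), (5, 3)\}$ as nontrivial. For each, the recipe is the same: lower-bound $\sum_{k \leq j} s_k^2$ by $m^2 + (j - 1)/16$ (the maximum alone gives $m^2$, every other top-row square contributes more than $1/16$), use this and the area budget to upper-bound $\sum_{i > j} s_i^2$, apply Cauchy--Schwarz to pass to $\sum_{i > j} s_i$, and maximize the resulting function of $m$ over $m \in (1/4, 1/2]$.

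The main obstacle is the tightest case $(n, j) = (5, 3)$, where $\sum_{k \leq 3} s_k^2 \geq m^2 + 1/8$ yields $s_4 + s_5 \leq \sqrt{1/2 - 2m^2}$, so that $(\star)$ becomes $\max_{m \in (1/4, 1/2]} \bigl(m + \sqrt{1/2 - 2m^2}\bigr) \leq 1$. A short derivative calculation locates this maximum at $m = 1/\sqrt{12}$ with value $\sqrt{3}/2$, comfortably below $1$. The analogous computation for $(n, j) = (4, 2)$ yields the even slacker bound $\sqrt{15}/4 < 1$, completing the argument.
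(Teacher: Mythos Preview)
Your argument is correct, but it takes a genuinely different route from the paper's proof. The paper argues contrapositively via an area/density count: assuming the second (right) shelf overflows with a square $Q$, it exhibits an L-shaped region $R$ of area $2\cdot\tfrac14\cdot\tfrac34=\tfrac38$ and shows that the packed squares together with $Q$ must cover at least the area of $R$, using a small charging step in which the ``extra'' portions of the two boundary squares $Q_1,Q_2$ are reassigned to fill the gap in the top-right corner. This is exactly the shelf-density-plus-charging machinery developed elsewhere in the paper, so the proof slots into that framework with almost no new work.

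You instead prove non-overflow directly: reduce everything to the single inequality $m+\sum_{i>j}s_i\le 1$, enumerate the finitely many $(n,j)$ configurations, and dispatch the nontrivial ones with Cauchy--Schwarz and a one-variable maximization. This is self-contained and elementary---you never invoke the shelf lemma---at the price of being more ad hoc and not obviously reusable for the later arguments in the paper. One minor remark: your case $(n,j)=(4,2)$ is in fact vacuous by the same power-mean trick you used for $(5,2)$, since $s_1+s_2+s_3>1$ forces $s_1^2+s_2^2+s_3^2>1/3$ and hence total area $>1/3+1/16=19/48>3/8$; so $(5,3)$ is the only case that genuinely needs the optimization. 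Your treatment of $(4,2)$ is correct, just unnecessary.
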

    \begin{proof}
	  Assume that the Ceiling Packing subroutine fails to pack a square $Q$.
	 By construction, the algorithm successfully packs squares aligned with the top of \unitsq and the squares aligned with the right boundary of \unitsq until the space left at the bottom of \unitsq is too small to fit square $Q$.
	  We prove that in this case the total area of the given sequence \seq is
	  greater than the area $\norm{R} = 3/8$ of the gray region $R$ depicted in Fig.~\ref{fig:ceiling2}(b).
	The idea is that all of $R$ is covered by packed squares except for potentially a small portion of it in the top right that 
	can only be left uncovered as a result of receiving a large square that covers parts of $\unitsq\setminus R$.
	  Let $Q_2$ with side length $x_2$ be the first square that was not packed aligned with the top boundary of \unitsq and 
	  let $Q_1$ with side length $x_1$ be the square packed aligned with the top of \unitsq that touches the top boundary of
	 $Q_2$. Let $d_1$ be the distance of $Q_1$ to the right boundary of \unitsq and $d_2$ the distance of $Q_2$ to the top
	 boundary of \unitsq. Then we have $d_1 < x_2$ and $d_2 = x_1$.
	Because all medium squares have a side length of at least $1/4$, we have $x_1^2 = 1/4x_1 + (x_1 - 1/4)x_1 \geq 1/4x_1+ (d_2 - 1/4) \cdot 1/4$ and $x_2^2 = 1/4x_2 + (x_2 - 1/4)x_2 > 1/4x_2+ (d_1 - 1/4) \cdot 1/4$.
	Furthermore, we get that the set $\seq_1$ of all squares packed before $Q_1$  in \seq has a total area of at least $1/4 \cdot (1 - d_1 - x_1)$, and that the set $\seq_2$ of all squares that appeared after $Q_2$  in \seq has a total area of at least $1/4 \cdot (1 - d_2 - x_2)$. 
	Hence, we conclude
	 \begin{align*}
		\norm{\seq} & \geq  \norm{\seq_1} + \norm{Q_1} + \norm{Q_2} + \norm{\seq_2}		\\
						& >  \frac{1}{4} (1 - d_1 - x_1) + \frac{1}{4} x_1 + (d_2-\frac{1}{4}) \frac{1}{4}
						  + \frac{1}{4} x_2 + (d_1 - \frac{1}{4}) \frac{1}{4} + \frac{1}{4} (1 - d_2 - x_2) \\
						& = \frac{1}{4}\left( 1 - x_1 - d_1 + x_1 + d_2 - \frac{1}{4} + x_2 + d_1 - \frac{1}{4} + 1 - x_2 - d_2\right) =  3/8 .
	\end{align*}
    \end{proof}
    
\subsection{Shelf Packing}\label{sec:generalShelfConcept}
In this section we revisit the well-known shelf-packing algorithm that is used for packing small squares into the unit square.
Given a set of squares with maximum size $h$, a {\em shelf} \newshelf
is a subrectangle of the container that has height $h$; the {\em Next Fit Shelf Algorithm \nextFitShelf{\newshelf}} places incoming squares into \newshelf next to each other, until
some object no longer fits; see Fig.~\ref{fig:shelf}. When that happens,
the shelf is closed, and a new shelf gets opened. Before we analyze the density of the resulting packing, we introduce some notation.

\begin{figure}[t]
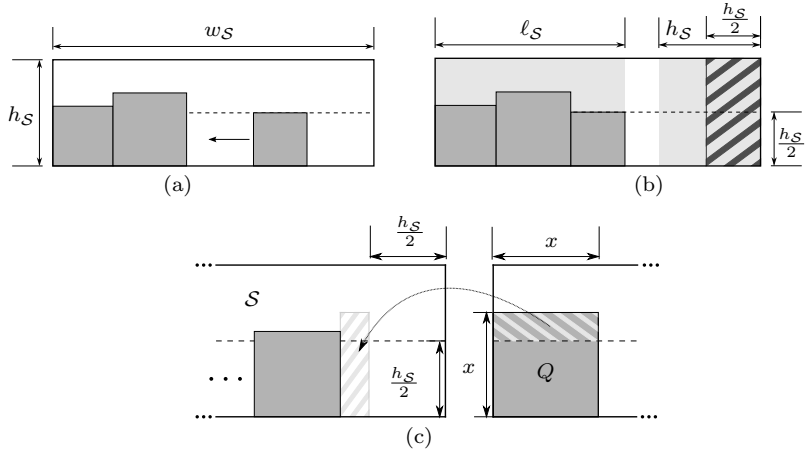

  \centering
  \hspace*{-2mm}
  \subfigure[]{
    \def\svgwidth{.44\textwidth}
    \import{./}{shelf1.tex}
    \label{fig:shelf}
    \hskip4mm
  }
  \subfigure[]{
    \def\svgwidth{.48\textwidth}
    \import{./}{shelf2.tex}
    \label{fig:shelfNotation}
  }
  \subfigure[]{
    \def\svgwidth{.5\textwidth}
    \import{./}{shelfend.tex}
    \label{fig:shelfEnd}
  }
  \caption{
    (a) A shelf \newshelf packed by \nextFitShelf{\newshelf} with squares of one height class.
    (b) Different areas of a shelf \newshelf.
       \occupied{\newshelf}: total area of squares in \packedseqOf{\newshelf} (dark gray),
       \usedSection{\newshelf}: region with light gray background
				(incl. \occupied{\newshelf}) to the left,
	\headOf{\newshelf}: region with light gray background to the right, and
	\shelfEnd{\newshelf}: hatched region to the right.
    (c) Assignment of $\extraOf{Q}$ (hatched) to \newshelf
	when square $Q$ causes an overflow of shelf \newshelf.
	\label{fig:shelfPacking}
    }
\end{figure}

\paragraph{Notation.}
In the following we call a shelf with height $2^{-k}$ designed to accommodate squares of height class $H_k$ an \emph{$H_k$-shelf}.
We let $w_\newshelf$ denote the width of a shelf \newshelf,
$h_\newshelf$ denote its height and \packedseqOf{\newshelf} denote the set of squares packed into it.
We define \usedSection{\newshelf} as the horizontal section of \newshelf that
contains \packedseqOf{\newshelf} and \usLength{\newshelf} as its length; see Fig.~\ref{fig:shelfNotation}.
We denote the last $h_\newshelf$-wide section at the end of \newshelf by
\headOf{\newshelf} and the last $h_\newshelf/2$-wide slice by \shelfEnd{\newshelf}.
The total area of the squares packed into a shelf \newshelf is \occupied{\newshelf}.
The part of the square $Q$ packed in the upper half of \newshelf 
is \extraOf{Q}.

A useful property of the shelf-packing algorithm is that \usedSection{\newshelf}
has a packing-density of $1/2$ if we pack $\newshelf$ with squares of the same
height class only. The gap remaining at the end of a closed shelf may vary depending
on the sequence of squares. However, the following density property described in
the following lemma (due to Moon and Moser~\cite{mm67}).
\begin{lemma}\label{lem:shelfend}
    Let \newshelf be an $H_k$-shelf with width $w_\newshelf$ and
    height $h_\newshelf$ that is packed by $\nextFitShelf{\newshelf}$ with a set \packedseqOf{\newshelf} of  $H_k$-squares.
    Let $Q$ be an additional square of $H_k$ with side length $x$ that does not fit into \newshelf.
    Then the total area $\norm{\packedseqOf{\newshelf}}$ of all squares packed into \newshelf plus the area $\norm{Q}$ of $Q$ is greater than $\norm{\newshelf}/2 - (h_\newshelf/2)^2 + \frac{1}{2}h_\newshelf\cdot x$.
\end{lemma}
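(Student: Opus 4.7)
The plan is to exploit the fact that every square packed into an $H_k$-shelf has side length strictly greater than $h_\newshelf/2$, which gives a uniform lower bound on area in terms of width. Writing $s_i$ for the side length of the $i$-th square in \packedseq and $W = \sum_i s_i$ for the total horizontal extent occupied by those squares, I would first observe that
\[
\norm{\packedseq} = \sum_i s_i^2 > \frac{h_\newshelf}{2}\sum_i s_i = \frac{h_\newshelf}{2}\, W,
\]
since $s_i > h_\newshelf/2$ for each $i$.

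Next I would use the overflow condition: because $Q$ does not fit into \newshelf next to the already-packed squares, its side length $x$ exceeds the remaining width, i.e.\ $x > w_\newshelf - W$, hence $W > w_\newshelf - x$. Combined with the previous inequality this yields
\[
\norm{\packedseq} > \frac{h_\newshelf}{2}(w_\newshelf - x) = \frac{\norm{\newshelf}}{2} - \frac{h_\newshelf\, x}{2}.
\]
Adding the exact area $\norm{Q} = x^2$ of the overflowing square then gives
\[
\norm{\packedseq} + \norm{Q} > \frac{\norm{\newshelf}}{2} - \frac{h_\newshelf\, x}{2} + x^2.
\]

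The final step is a straightforward completion of the square: the claimed bound is equivalent to showing
\[
x^2 - \frac{h_\newshelf\, x}{2} \;\geq\; -\Bigl(\frac{h_\newshelf}{2}\Bigr)^{\!2} + \frac{h_\newshelf\, x}{2},
\]
which rearranges to $(x - h_\newshelf/2)^2 \geq 0$ and is strict whenever $x \neq h_\newshelf/2$. Since $Q \in H_k$ means $x \in (h_\newshelf/2, h_\newshelf]$, strictness is guaranteed, and the lemma follows. There is no real obstacle here; the only subtlety is remembering that the $s_i > h_\newshelf/2$ bound is strict, which is what promotes the final inequality from $\geq$ to $>$ and makes the whole chain strict even in the edge case where the packed squares exactly fill the width.
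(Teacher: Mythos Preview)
Your argument is correct and is exactly the standard Moon--Moser computation that the paper cites rather than reproduces: bound each packed square's area below by $(h_\newshelf/2)\cdot s_i$, use the overflow condition $W > w_\newshelf - x$, add $x^2$, and complete the square. One small remark on your closing comment: the strictness of the final inequality is already guaranteed by the strict overflow condition $x > w_\newshelf - W$ in step~3 (which holds even when \packedseq is empty, since then $w_\newshelf - x < 0$), so you do not actually need the strictness of $s_i > h_\newshelf/2$ for that purpose---but this is a cosmetic point and does not affect correctness.
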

In other words: If we count the extra area of
the overflowing square $Q$ towards the density of a closed shelf \newshelf,
we can, w.l.o.g., assume that \newshelf has a packing
density of $1/2$, except for at its end \shelfEnd{\newshelf}.
We formalize this charging scheme as follows.
When a square $Q$ causes a shelf \newshelf to be closed, we assign
\extraOf{Q} to \newshelf; see Fig.~\ref{fig:shelfEnd}.
The total area assigned to \newshelf this way is referred to as \assigned{\newshelf}.
Further, define \totalAreaOf{\newshelf} as \occupied{\newshelf} plus \assigned{\newshelf} minus \extraOf{Q} of all squares $Q$ in \newshelf.
   
  \begin{corollary} 
  \label{cor:PackingAreaOfShelf}
    Let \newshelf be a closed shelf packed by the shelf-packing algorithm.
    Then $\totalAreaOf{\newshelf} \geq \norm{\newshelf\setminus\shelfEnd{\newshelf}}/2$.
  \end{corollary}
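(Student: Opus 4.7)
The plan is to unpack the definition of $\totalAreaOf{\newshelf}$ and then apply the overflow condition that caused $\newshelf$ to close. Let $Q_1', \ldots, Q_m'$ be the squares in $\packedseqOf{\newshelf}$ with side lengths $y_1, \ldots, y_m$, and let $x$ be the side length of the square $Q$ whose placement would have overflowed $\newshelf$. Since each packed square satisfies $y_i > h_\newshelf/2$, its extra area is exactly $\extraOf{Q_i'} = y_i(y_i - h_\newshelf/2)$, so subtracting it from $y_i^2$ leaves the bottom-half rectangle of $Q_i'$ with area $y_i \cdot h_\newshelf/2$. Summing these contributions and then adding $\assigned{\newshelf} = \extraOf{Q} = x(x - h_\newshelf/2)$ collapses the definition to
\begin{equation*}
  \totalAreaOf{\newshelf} \;=\; \frac{h_\newshelf}{2}\sum_{i=1}^m y_i \;+\; x\!\left(x - \frac{h_\newshelf}{2}\right).
\end{equation*}

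Next I would invoke the shelf-packing overflow condition: because $Q$ did not fit beside the previously packed squares, $\sum_i y_i + x > w_\newshelf$, i.e., $\sum_i y_i > w_\newshelf - x$. Substituting yields
\begin{equation*}
  \totalAreaOf{\newshelf} \;>\; \frac{h_\newshelf}{2}(w_\newshelf - x) + x\!\left(x - \frac{h_\newshelf}{2}\right) \;=\; \frac{h_\newshelf w_\newshelf}{2} - h_\newshelf x + x^2.
\end{equation*}

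The last step is purely algebraic. Completing the square in $x$ rewrites the right-hand side as $\frac{h_\newshelf}{2}\!\left(w_\newshelf - \frac{h_\newshelf}{2}\right) + \left(x - \frac{h_\newshelf}{2}\right)^2$, which is at least $\frac{h_\newshelf}{2}(w_\newshelf - h_\newshelf/2) = \norm{\newshelf \setminus \shelfEnd{\newshelf}}/2$, proving the claim. I do not expect any real obstacle here: the only conceptual step is to notice that the bookkeeping built into $\totalAreaOf{\newshelf}$ precisely cancels the upper halves of the packed squares, reducing the estimate to a one-variable quadratic in the side length $x$ of the overflowing square. This cancellation is essentially the same Moon--Moser-style accounting that also underlies Lemma~\ref{lem:shelfend}, and the quadratic attains its minimum at exactly $x = h_\newshelf/2$, which is precisely the boundary between $H_k$ and smaller height classes.
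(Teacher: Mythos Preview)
Your proof is correct and is essentially the same Moon--Moser accounting that the paper relies on. The paper does not spell out a proof of the corollary at all; it simply states that it follows directly from Lemma~\ref{lem:shelfend}, whereas you unpack the definition of $\totalAreaOf{\newshelf}$ and redo the one-line overflow computation explicitly---but the underlying argument is identical.
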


	\begin{proof}
		If the packing of \packedseq intersects with \shelfEnd{\newshelf}, then
		\begin{equation*}
			\totalAreaOf{\newshelf} \geq \occupied{\newshelf} - \sum_{Q'\in\packedseqOf{\newshelf}}\extraOf{Q'}> h_\newshelf/2 \cdot (w_\newshelf-h_\newshelf/2).
		\end{equation*}		
		Otherwise, square $Q$ with side length $x$ caused shelf \newshelf to be closed and we have:
		\begin{align*}
			\totalAreaOf{\newshelf}&=  \occupied{\newshelf} - \sum_{Q'\in\packedseqOf{\newshelf}}\extraOf{Q'} + \extraOf{Q} \\
										 &= \qquad\quad \frac{h_\newshelf}{2} \cdot (w_\newshelf-x) \qquad\quad + x(x-\frac{h_\newshelf}{2})\\
										 &\geq \frac{h_\newshelf}{2} \cdot (w_\newshelf-x) + \frac{h_\newshelf}{2}(x-\frac{h_\newshelf}{2}) & \\
										& = \frac{h_\newshelf (w_\newshelf-\frac{h_\newshelf}{2})}{2}.&	
		\end{align*}
	\end{proof}
   
\subsection{The \textit{packSmall} Subroutine}\label{sec:packSmall}
As noted above, the presence of one large or few medium squares already assigns 
a majority of the required area, without causing too much fragmentation. Thus, the
critical question is how to deal with small squares in a way that leaves space for 
larger ones, but allows us to find extra space for a continuing sequence of small squares.

We describe an algorithm for packing any family of $H_k$-squares with $k\geq 2$ and total area up to $11/32$ in Sections~\ref{sec:packSmallAlgo_overview}~to~\ref{sec:packSmallAlgo_H4} and discuss the resulting packing density in Sections~\ref{sec:packSmallAnalysis_overview}~to~\ref{sec:packSmallAnalysis_sep4}.
In Section~\ref{sec:packSmallAlgo_mixed} we describe mixed packing of small squares and analyze the corresponding density in Sections~\ref{sec:packSmallAnalysis_mixed}~and~\ref{sec:packSmallAnalysis_additional}.

\subsubsection{The \textit{packSmall} Algorithm: Overview and Notation}\label{sec:packSmallAlgo_overview}
In the Recursive Shelf Algorithm we pack all small squares according to the
\emph{packSmall} subroutine, independent of the large and medium square
packings. The method is based on the Next Fit Shelf (NFS) packing scheme described above.
We first give a brief overview of the general distribution of the shelves and the order in which we allocate the shelves for the respective height classes.

\paragraph{Notation and Distribution of the Shelves.}
The general partition of the unit square we use is depicted in Fig.~\ref{fig:smallPartition1}.
The regions $M_1, \dots, M_4$ (in that order) act as shelves for height class $H_2$.
We call the union $M$ of the $M_i$ the \emph{main packing area};
this is the part of \unitsq that will definitely be packed with squares by our packSmall subroutine.
The other regions may stay empty, depending on the sequence of incoming small squares.
The regions $B_1, \dots, B_4$ provide shelves for $H_3$. We call the union $B$
of the $B_j$ the \emph{buffer region}.
In the region $A$ we reserve $H_k$-shelf space for every $k \geq 4$.
We call $A$ the \emph{initial buffer region}.
The ends $E_1$, $E_2$ and $E_3$ of the main packing regions $M_1$, $M_2$ and $M_3$ serve as both
parts of the main packing region and additional buffer areas.
We use $\Ebuffershelf_i$ to refer to the vertical section of $M_i$ that does not intersect with \usedSection{M_i}.
\begin{figure}[t]
  \centering
  \subfigure[]{
	  \def\svgwidth{.333\textwidth}
	  \import{./}{smallPartition.tex}
      \label{fig:smallPartition1}
      \hskip 2cm
  }
  \subfigure[]{
      \includegraphics[width = 0.275\textwidth]{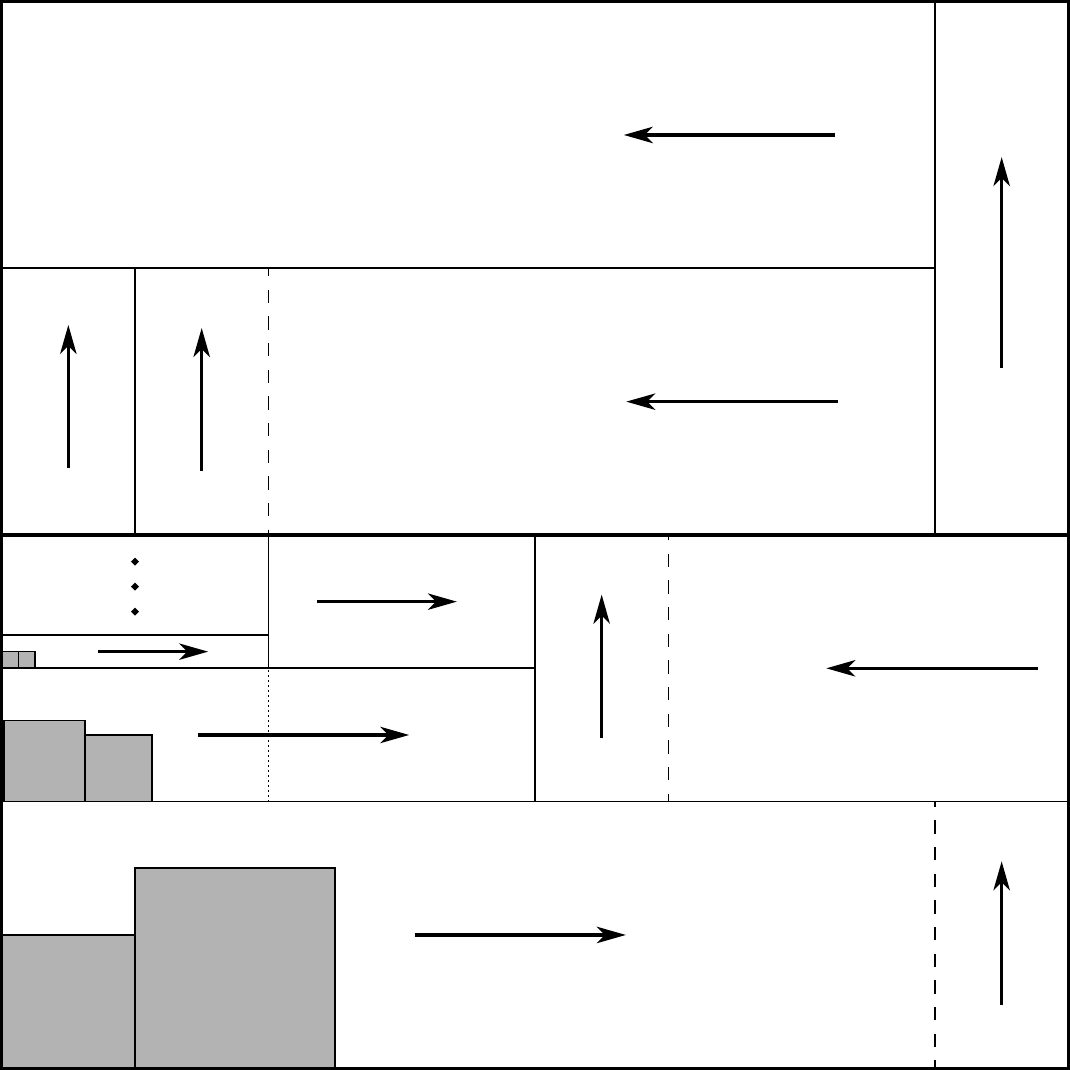}
      \label{fig:smallPartition2}
      \hskip 2mm
  }
\caption{
	(a) Distribution of the shelves for the \emph{smallPack} Algorithm.
	(b) Initital shelf packing and packing directions.
}
  \label{fig:smallPartition}
\end{figure}
\paragraph{Shelf Allocation Order.} During the packing process, we maintain open shelves for all the height classes
for which we already received at least one square as input and pack
each of them according to \emph{NFS}. The order and location for the shelf allocation are chosen as follows.
\begin{itemize}
	\item We start packing small squares into shelves that we open on the left side
		of the lower half \lowerhalf of \unitsq; see Fig.~\ref{fig:smallPartition2}.
		The region $M_1$ serves as the first $H_2$-shelf, the left half (width $1/4$)
		of $B_1$ serves as the first shelf for $H_3$ and region $A$ is reserved for
		first shelves for any $H_k$ with $k\geq 4$; see details below.
	\item Once an overflow occurs in a main packing region $M_i$, we close the
		corresponding $H_2$-shelf and continue packing $H_2$-squares into $M_{i+1}$. 
	\item Once the packing in the initial shelf for $H_{k}$ with $k\geq3$ reaches a certain length,
		we cut a vertical slice $\newvertical_{k}$ out of the currently open $H_2$-shelf (one of the $M_i$ regions)
		and use $\newvertical_{k}$ for the packing of subsequent $H_{k}$-squares.
	\item Once the packing in $\newvertical_{k}$ reaches a certain height,
		we allocate space in the buffer region $B\cup E$ to accommodate $H_{k}$-squares before returning to pack $\newvertical_{k}$.
	\item Once $\newvertical_{k}$ is full, we cut another vertical slice out of the main
		packing region and repeat the process.
\end{itemize}
We claim that we can accommodate any family of small squares with total area up to $11/32$ this way.
In the following, we describe the packings for the different small height
classes in more detail.

\subsubsection{The \textit{packSmall} Algorithm: Separate Packing of $H_2$-squares}\label{sec:packSmallAlgo_H2}
In the main packing area, we always maintain an open shelf $M_i$ for height class $H_2$,
which is packed with $H_2$-squares according to $\nextFitShelf{M_i}$.
In order to avoid early collisions with large and medium squares,
we start with packing $M_1$ from left to right, continuing
with packing $M_2$ from right to left.
Then we alternately treat $M_3$ and $M_4$ as the current main packing region,
placing $H_2$-squares into the region whose \textit{usedSection}
is smaller. When the length of \usedSection{M_4}
becomes larger than $3/8$,
we prefer $M_3$ over $M_4$ until $M_3$ is full.

\subsubsection{The \textit{packSmall} Algorithm: Separate Packing of $H_3$-squares}\label{sec:packSmallAlgo_H3}
For the packing of $H_3$-squares we alternate between using the buffer regions $B_1$, \dots, $B_4$, and vertical slices of width $1/8$ cut out of the main packing region as the currently open $H_3$-shelf; see details below and  Fig.~\ref{fig:h3_packing} for an example.

\begin{figure}[t]
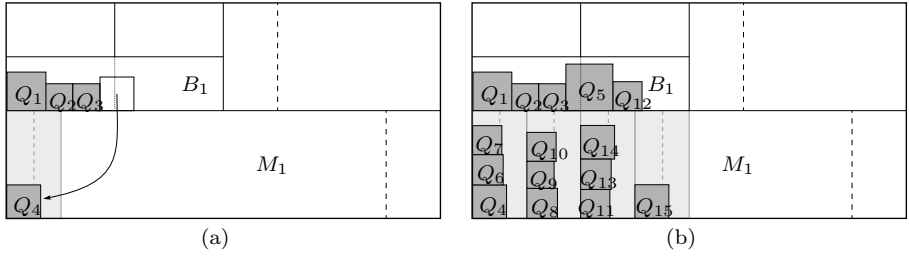

  \centering
  \subfigure[]{
	  \def\svgwidth{.47\textwidth}
	  \import{./}{h3_packing_new1.tex}
      \label{fig:h3_packing1}
  }
  \subfigure[]{
	  \def\svgwidth{.47\textwidth}
	  \import{./}{h3_packing_new2.tex}
      \label{fig:h3_packing2}
  }
\caption{
	A sample packing of $H_3$-squares in the lower half of \unitsq.
	(a) Initial packing and first vertical shelf.
	(b) Packing after three iterations of step 2. and step 3.
}
  \label{fig:h3_packing}
\end{figure}
The algorithm uses variables $\mu$, $\beta$, $\eps_1$, $\eps_2$ and $\eps_3$, which are used to quantify the growth of the packings in regions $M$, $B$, $E_1$ ,$E_2$, and $E_3$, respectively. In the algorithm we use a comparison of $\mu$ and $\beta + \sum_i\eps_i$ to decide whether to place the next incoming square into the main packing region $M$ or the buffer region $B \cup E$. Intuitively, we do this to ensure approximately proportional growth of the two regions (see Lemma \ref{lem:invariant}), which in turn helps avoiding early collisions with large and medium squares.
In addition, we use $\newvertical_3$ to denote the (only) currently open vertical $H_3$-shelf in $M$. We define the algorithm \textit{packSmall(3)} for $H_3$ as follows.
\begin{enumerate}
	\item[0.] Set $\mu := 0$, $\beta := 0$, $\varepsilon_i := 0 \;\forall i$, $\newvertical_3 := \emptyset$.
	\item Open an $H_3$-shelf in $B_1$. Use \nextFitShelf{$B_1$} to pack incoming $H_3$-squares $Q$
		   and increase $\beta$ by $x_Q$ each time. 
		   Once $\beta +  x_Q \geq \mu + 1/4$, for the next incoming square $Q$, set $\beta := \beta + 1/16-x_Q$ and continue with step 2.
	\item Open a new vertical shelf $\newvertical$ of width $\frac{1}{8}$ and height $\frac{1}{4}$ at the end of the packing
			in $M$. Set $\newvertical_3 := \newvertical$.
			Use \nextFitShelf{$\newvertical_3$} (from bottom to top) to pack $H_3$-squares
			until the packing of the next square $Q$ in $\newvertical_3$ would
			intersect with \headOf{$\newvertical_3$}.
	\item  Increase $\mu$ by $1/16$ and:
			\begin{enumerate}
			\item  If $\beta + \sum_i\varepsilon_i + x_Q \geq \mu + 1/4$, pack $Q$ into $\newvertical_3$ and increase $\beta$ by $x_Q - 1/16$.
			\item Otherwise:
				\begin{enumerate}
					\item If there is an open end buffer shelf $\Ebuffershelf_i$ for which $M_i$ is closed, then 
					\begin{itemize}
				 		\item either pack $Q$ into $\newvertical_3$ and set $\varepsilon_i := \varepsilon_i + 1/16$ if $x_Q > 1/8 - \ell_i$ or
						\item use \nextFitShelf{$\Ebuffershelf_i$} to pack $Q$ and set $\varepsilon_i := \varepsilon_i + x_Q$, otherwise.
					\end{itemize}
					\item Otherwise, use \nextFitShelf{$B_1$} to pack $Q$ and increase $\beta$ by $x_Q$.
				\end{enumerate}
			\end{enumerate}
	\item Use \nextFitShelf{$\newvertical_3$} to pack all following $H_3$-squares until $\newvertical_3$ is full.
	\item Repeat Steps 2 to 4 using regions $M_1$, \dots, $M_4$ (in the same order and direction as for the $H_2$-square packing) for the placement of $\newvertical_3$ in Step~2 and regions $B_1$, \dots, $B_4$, $\newshelf_3$ (in this order) for the placement of $Q$ in Step~3(b)ii.
	If the algorithm closes region $M_i$, set $\varepsilon_i := 2\usLength{E_i}$.
If at any point in time $\varepsilon_i \geq 2/16$ or $\usLength{\Ebuffershelf_i}\geq 2/16$,
close $\Ebuffershelf_i$ and set $\varepsilon_i := \max\{\varepsilon_i,2/16\}$.
\end{enumerate}

\subsubsection{The \textit{packSmall} Algorithm: Separate Packing of $H_k$-squares with $k\geq4$}\label{sec:packSmallAlgo_H4}
For each $H_k$ with $k\geq4$, the packing algorithm \textit{packSmall(k)} is defined as follows.
\begin{enumerate}
	\item[0.] Set $\mu := 0$, $\beta := 0$, $\varepsilon_i := 0 \;\forall i$, $\newvertical_k := \emptyset$, $\newbuffer_k := \emptyset$.
	\item Open an $H_k$-shelf of length $1/4$ (and height $2^{-k}$) on top of the existing shelves in $A$. Call this shelf \initialbuffer{k}.
			Use \nextFitShelf{\initialbuffer{k}} (from left to right) to pack incoming $H_k$-squares until \initialbuffer{k} is 					full.
	\item Open a vertical shelf $\newvertical$ of width $2^{-k}$ and height $1/4$ at the end of the packing
			in $M$. 
			Set $\newvertical_k := \newvertical$  and use \nextFitShelf{$\newvertical_k$} (from bottom to top)
			to pack $H_k$-squares
			until the packing of the next square $Q$ in $\newvertical_k$ would
			intersect with \headOf{$\newvertical_k$}.
	\item  Increase $\mu$ by $2^{-k}/2$ and:
			\begin{enumerate}
				\item If $\beta + \sum_i\varepsilon_i \geq \mu + 3/16$, pack square $Q$ into $\newvertical_k$.
				\item Otherwise:
				\begin{enumerate}
					\item If there is an open end buffer shelf $\Ebuffershelf_i$ for which $M_i$ is closed,
						then open a horizontal $H_k$-shelf $\newendbuffer$ with width $1/8-\usLength{\Ebuffershelf_i}$ and
						height $2^{-k}$ on top of the current packing in $\Ebuffershelf_i$, set
						$\newendbuffer_k := \newendbuffer$ and $\varepsilon_i := \varepsilon_i + 2^{-k}/2$.
						Use \nextFitShelf{$\newendbuffer_k$} to pack incoming $H_k$-squares until $\newendbuffer_k$ is full.
					\item Otherwise, open a vertical $H_k$-shelf $\newbuffer_k$ (with height $1/8$ and width $2^{-k}$) at
  						the end of the current packing in $B_1$, set $\beta := \beta + 2^{-k}$ and 
						use \nextFitShelf{$\newbuffer_k$} to pack incoming $H_k$-squares until $\newbuffer_k$ is full.
				\end{enumerate}
			\end{enumerate}
	\item Use \nextFitShelf{$\newvertical_k$} to pack all following $H_k$-squares until $\newvertical_k$ is full.
	\item Repeat Steps 2 to 4 using regions $M_1$, \dots, $M_4$ (in the same order and direction as for the $H_2$-square packing) for the placement of $\newvertical_k$ in Step~2 and regions $B_1$, \dots, $B_4$, $\newshelf_4$ (in this order) for the placement of $\newbuffer_k$ in Step~3(b)ii.
	If the algorithm closes region $M_i$, set $\varepsilon_i := 2\usLength{E_i}$.
If at any point in time $\varepsilon_i \geq 2/16$ or $\usLength{\Ebuffershelf_i}\geq 2/16$,
close $\Ebuffershelf_i$ and set $\varepsilon_i := \max\{\varepsilon_i,2/16\}$.
\end{enumerate}

\subsubsection{\textit{packSmall} Analysis: Overview}\label{sec:packSmallAnalysis_overview}
In following sections we prove that the \emph{packSmall} subroutine successfully packs
any set of \emph{small} squares with total area at most $11/32$.

In order to quantify the overall density achieved by the
\emph{packSmall} Algorithm, we make some simplifying assumptions on the density
reached in the respective shelves. We argue that low-density shelves only appear
along with high-density regions and define a charging scheme that assigns extra
areas from dense regions to sparse regions in order to estimate the overall density.
More precisely, we prove the following important invariant for our algorithm, which is essential for 
the overall density analysis in the case of mixed packings; see Section~\ref{sec:packSmallAnalysis_mixed}. 

\begin{property}\label{prop:mainPackingDensity}
  In any step of the algorithm, the total area of the small squares packed
  into \unitsq is at least $\norm{\usedSection{\mainPackingArea}\setminus E}/2$.
\end{property}

We start the density analysis by introducing notation, simplifying assumptions and general packing properties 
in Section~\ref{sec:packSmallAnalysis_general}.
We proceed with analyzing the case of separately packing a set of only $H_k$-squares. 
In Sections~\ref{sec:packSmallAnalysis_sep2}, Sections~\ref{sec:packSmallAnalysis_sep3}~and~\ref{sec:packSmallAnalysis_sep4}, we discuss the cases $k=2$, $k = 3$ and $k\geq 4$, respectively.
We describe and analyze the case of packing a mixed sequence of small squares in Sections~\ref{sec:packSmallAlgo_mixed} ~\ref{sec:packSmallAnalysis_mixed} and conclude with a presentation of additional density properties in Section~\ref{sec:packSmallAnalysis_additional}.

\subsubsection{\textit{packSmall} Analysis: Preliminaries}\label{sec:packSmallAnalysis_general}
For the analysis of the density achieved with the packing of small squares, we use the following notation:
\begin{align*}
	\vBase{k}{X} &:= \text{set of all vertical $H_k$-shelves in region $X$} \\
	\vOpen &:= \text{set of all open vertical shelves} \\
	\vClosed &:= \text{set of all closed vertical shelves} \\
V^{k}_\text{head} &:= \text{set of all $\mathcal{V} \in V^k_M$ for which we executed Step 3} \\
                & \text{in the \textit{smallPack}($k$) subroutine after opening $\mathcal{V}$} \\
V_{\text{head}} &:= \bigcup_{k\in K}{V^{k}_\text{head}}\\
	\vHeadB &:=  \text{set of all $\newvertical \in \vHead$ for which we placed a square in $B$ (Step 3(b)ii)} \\
	\vHeadE &:=  \text{set of all $\newvertical \in \vHead$ for which we placed a square in $E$ (Step 3(b)i)} \\
	V_{k\geq4} &:= \text{set of all $H_{k\geq4}$-shelves}
\end{align*}
\begin{align*}
	K &:= \text{set containing all $k$ for which } \vBase{k}{M} \neq \emptyset\\
	\beta, \varepsilon_i, \mu,\newvertical_k, \newbuffer_k &:= \text{variables used in the algorithm (see above)}\\
	e &:= \text{the index of the end buffer region $\Ebuffershelf_i$ that was closed last}\\
	\usLength{X} &:= \text{total length of \usedSection{X} (see Section~\ref{sec:generalShelfConcept})}\\
	\totalAreaOf{\newshelf} &:= \occupied{\newshelf} + \assigned{\newshelf} - \sum_{Q\in\newshelf}\extraOf{Q} \text{ for shelf \newshelf}
\end{align*}
To make similar simplifying density considerations as in Corollary~\ref{cor:PackingAreaOfShelf}, we define the following charging scheme that assigns area from high-density regions to low-density regions.
\begin{enumerate}
	\item[] \emph{Charging Scheme:}
	\item[I:] From each square $Q$ that causes an overflow in a shelf \newshelf assign \extraOf{Q} to \newshelf.
	\item[II:] From each $H_3$-square $Q$ that was packed into $\newvertical_3$ in Step~3(a), assign \extraOf{Q} to the buffer region $B_i$.
	\item[III:] From each $H_3$-square $Q$ that was packed into $\newvertical_3$ in Step~3(b)i, assign \extraOf{Q} to the buffer region $\Ebuffershelf_i$.
\end{enumerate}
In the following we use this charging scheme for the definition of \assigned{\newshelf} and
assume, w.l.o.g.\ that $\usLength{\newshelf} \geq w_{\newshelf\setminus\shelfEnd{\newshelf}}$ for any closed shelf \newshelf that is packed by \nextFitShelf{\newshelf}.
Because we only charge squares with their extra area and we do not charge any squares twice, we know that $\totalAreaOf{\newshelf}$ is a lower bound on the actual density of \newshelf.
In the remainder of this section, we prove some general packing properties, which we use in subsequent density considerations.
\begin{lemma}\label{lem:densityClosedVertical}
	Let $\newvertical$ be an $H_k$-shelf in \vClosed, then
$\totalAreaOf{\newvertical} \geq \norm{\newvertical}/2 - (w_\newvertical/2)^2$.
\end{lemma}
\begin{proof}
The claim follows directly with Corollary~\ref{cor:PackingAreaOfShelf} and the fact that each vertical $H_k$-shelf \newvertical is packed (vertically) by \nextFitShelf{\newshelf} with $H_k$-squares only. \qed
\end{proof}
\begin{lemma}\label{lem:densityOpenVertical}
	Let $\newvertical$ be an $H_k$-shelf in \vOpen, then
	$\totalAreaOf{\newvertical} \geq (w_\newvertical/2)^2$.
\end{lemma}
\begin{proof}
The claim follows directly from the fact that, by construction, each open vertical shelf contains at least one square of size at least $w_{\newvertical_o}/2$. \qed
\end{proof}
\begin{lemma}\label{lem:densityAllHkVertical}
	The total area $\sum\limits_{\newvertical\in\vBase{k}{M}} \totalAreaOf{\newvertical}$
	of vertical $H_k$-shelves in the main packing area $M$ is greater or equal to
	$\sum\limits_{\newvertical\in\vBase{k}{M}}\norm{\newvertical}/2
	- \sum\limits_{\newvertical\in\vBase{k}{M}\cap\vClosed}(w_\newvertical/2)^2
	- \frac{1}{4} \cdot 2^{-k}/2+ \totalAreaOf{\newvertical_k}$.
\end{lemma}
\begin{proof}
By construction, we always close the vertical $H_k$-shelf $\newvertical_k$ before opening a new one. Thus, $\newvertical_k$ is the only vertical $H_k$-shelf in $M$ that is open and the claim follows with Lemma~\ref{lem:densityClosedVertical} and the fact that $\norm{\newvertical_k} = 1/4 \cdot 2^{-k}$. \qed
\end{proof}

\begin{lemma}\label{lem:shelfLengthDensity}
For any $H_k$-shelf \newshelf packed by \nextFitShelf{\newshelf}, it holds $\totalAreaOf{\newshelf} \geq h_\newshelf \usLength{\newshelf}/2$. 
\end{lemma}
\begin{proof}
The claim follows directly from the fact that \usedSection{\newshelf} is packed with $H_k$-squares only, which all have size at least $h_\newshelf/2$.
\end{proof}

\begin{lemma}\label{lem:densityHead}
	For all $k\geq4$, we have 
\begin{equation*}
	\totalAreaOf{\newvertical_k} + \totalAreaOf{\newbuffer_k} \geq
		\begin{cases}
			\frac{1}{4}\cdot \frac{2^{-k}}{2} - (\frac{2^{-k}}{2})^2 &\mbox{if } \newvertical_k \in \vHead \text{ and } \newbuffer_k \in \vClosed\\
			\frac{1}{8}\cdot \frac{2^{-k}}{2} 							& \mbox{otherwise}
		\end{cases}
\end{equation*}
\end{lemma}
\begin{proof} 
If $\newbuffer_k$ is closed, then $\totalAreaOf{\newbuffer_k} \geq (1/8 - 2^{-k}/2)\cdot 2^{-k}/2)$ by Corollary~\ref{cor:PackingAreaOfShelf}. Otherwise, $\totalAreaOf{\newbuffer_k} \geq 2^{-k}/2$ because $\newbuffer_k$ contains at least one $H_k$-square.
We only execute Step~3 of the algorithm if the next square $Q$ would intersect with \headOf{$\newvertical_k$} when placed in $\newvertical_k$. Thus, $\totalAreaOf{\newbuffer_k} \geq (1/4 - 2^{-k} - 2^{-k} )\cdot 2^{-k}/2$ if $\newvertical \in \vHead$. If $\newvertical \notin \vHead$, then $\totalAreaOf{\newbuffer_k} \geq (\frac{2^{-k}}{2})^2$ and $\newbuffer_k$ must be closed.
\end{proof}
\begin{lemma}\label{lem:invariant}
After each step in the algorithm it holds $\beta + \sum_i\varepsilon_i \geq \mu + 3/16$.
\end{lemma}
\begin{proof}
To simplify the notation define $\varepsilon := \sum_i\varepsilon_i$. 
Initially, we have $\mu = \beta = \varepsilon = 0$. Now consider the execution of any step in the algorithm that could change any of these values.
\begin{description}
	\item[Step 1.:] The variable values only change if $k=3$.
						If $\beta^{\text{old}}+ x_Q \geq \mu^{\text{old}} + 1/4$, we have
								$\beta^{\text{new}} + \varepsilon^{\text{new}} =
								\beta^{\text{old}} + x_Q - 1/16 + \varepsilon^{\text{old}} 
								\geq \mu^{\text{old}} + 1/4 - x_Q + x_Q - 1/16 = \mu^{\text{new}} + 3/16$.
						Otherwise,  $\Delta\beta = x_Q$ and $\Delta\mu = \Delta\varepsilon = 0$.
	\item[Step 2.:] Nothing changes.
	\item[Step 3.:]
					Independent of $k$, we have $\mu^{\text{new}} = \mu^{old} + 2^{-k}/2$.
			\begin{description}
					\item[Subcase (a) \& $k = 3$:] This step is only executed if
							$\beta^{\text{old}} + \varepsilon^{\text{old}} + x_Q \geq \mu^{\text{new}} + 1/4$, which implies
								$\beta^{\text{new}}+ \varepsilon^{\text{new}}
								= \beta^{old} + x_Q - 1/16 + \varepsilon^{old}
								\geq \mu^{\text{new}} + 1/4  + x_Q - 1/16 - x_Q
								= \mu^{\text{new}} + 3/16$
					\item[Subcase (a) \& $k \geq 4$:]  We have $\Delta\beta = \Delta\varepsilon = 0$ and $\beta^{\text{old}} + \varepsilon^{\text{old}} \geq \mu^{\text{new}} + 1/4$.
					\item[Subcase (b) \& $k = 3$:] We either have $\Delta\beta = x_Q$ and $\Delta\varepsilon = 2^{-k}/2$ (subcase i), or $\Delta\beta = x_Q \geq 2^{-k}/2$ and $\Delta\varepsilon = 0$ (subcase ii).  
					\item[Subcase (b) \& $k \geq 4$:] We either have
								 $\Delta\beta = 0$ and $\Delta\varepsilon = 2^{-k}/2$ (subcase i), or 
								$\Delta\beta = 2^{-k}$ and $\Delta\varepsilon = 0$ (subcase ii).
			\end{description}
	\item[Step 4.:] Nothing changes.
	\item[Step 5.:] We only increase the left hand side of the equation.
\end{description}
The inequality holds in either of the cases and the claim follows by induction.\qed
\end{proof}
\begin{lemma}\label{lem:betaBRel}
In each step of the algorithm, we have $\assigned{B} \geq 1/16 \cdot( \beta - \usLength{B} )$.
\end{lemma}
\begin{proof}
We give a proof by induction over the changes of $\assigned{B}$, $\beta$ and $\usLength{B}$.
Initially, we have $\beta = \assigned{B} = \usLength{B}= 0$. If $k=3$, then $\Delta\beta = (x_Q-\frac{1}{16})$, $\Delta\usLength{B}= 0$ and $\Delta\assigned{B} = x_Q(x_Q-\frac{1}{16}) \geq \frac{1}{16}(\Delta\beta-\Delta\usLength{B})$  in Step 3(a) and $\Delta\assigned{B} = 0$ and $\Delta\beta = \Delta\usLength{B} = x_Q$ in Steps 1 and 3(b). If $k\geq4$, then $\Delta\beta = \Delta\usLength{B} = 2^{-k}$ in Step 3(b). In all other cases, $\Delta\assigned{B} \geq 0 = \Delta\beta = \Delta\usLength{B}$.\qed
\end{proof}
\begin{lemma}\label{lem:mu_and_head}
In each step of the algorithm: $\mu = \sum_{\newvertical\in\vHead}\frac{w_\newvertical}{2}$ and
$\vClosed \subseteq \vHead$.
\end{lemma}
\begin{proof}
By construction, we increase $\mu$ by $\frac{2^{-k}}{2} =\frac{w_\newvertical}{2}$ each time we execute Step~3 for the packing of $H_k$-squares.
In addition, we only close the currently open vertical main packing shelf $\newvertical_k$ (Step 4) after executing Step~3, which proves the claim.\qed
\end{proof}

\subsubsection{\textit{packSmall} Analysis: Overall Density of Separate $H_2$-Square Packing}\label{sec:packSmallAnalysis_sep2}
\begin{lemma}
The algorithm successfully packs any sequence of $H_2$-squares with total area at most $11/32$.
\end{lemma}
\begin{proof}
Using the Next Fit Shelf Algorithm \nextFitShelf{M}, the packing explicitly allocates a position for each incoming $H_2$-square until on overflow occurs in $M_4$. In that case, we have $\norm{\packedseq} + \norm{Q} > 1/4 \cdot w_{M\setminus E} \cdot 1/2 = 1/8\cdot (7/8 + 3/8 + 5/8 + 7/8) = 22/64$ by Corollary~\ref{cor:PackingAreaOfShelf}, which contradicts $\norm{\packedseq} \leq 11/32$. Thus, the algorithm successfully packs all incoming $H_2$-squares.\qed
\end{proof}

\subsubsection{\textit{packSmall} Analysis: Overall Density of Separate $H_3$-Square Packing}\label{sec:packSmallAnalysis_sep3}
In this subsection we analyze the overall packing density for the special case of packing a sequence of squares that all belong to height class $H_3$.
\begin{lemma}\label{lem:densityH3}
	If the input sequence contains only $H_3$-squares, then
$\norm{\packedseq}
	\geq \frac{1}{8}\usLength{M\setminus E}$ after each step of the algorithm.
\end{lemma}
\begin{proof}
By construction, Section $M$ only contains vertical $H_3$-shelves. Thus, we have $\usLength{M} = \sum_{\newvertical\in\vBase{3}{M}}w_\newvertical$ and with Lemma~\ref{lem:densityAllHkVertical} we get
\begin{equation}\label{eq:densityM3}
\totalAreaOf{M} \geq \sum\limits_{\newvertical\in\vBase{3}{M}} \totalAreaOf{\newvertical}
\geq \frac{1}{4} \usLength{M}/2 - \sum\limits_{\newvertical\in\vBase{3}{M}\cap\vClosed}(w_\newvertical/2)^2
- \frac{1}{16}\cdot\frac{1}{4} + \totalAreaOf{\newvertical_3}
\end{equation}
With Lemmas~\ref{lem:shelfLengthDensity}~and~\ref{lem:betaBRel}, we get
\begin{equation}\label{eq:densityB3}
\totalAreaOf{B} \geq \assigned{B} + \occupied{B} - \extraOf{B} \geq \frac{1}{16} \cdot (\beta - \usLength{B}) + \frac{1}{16} \usLength{B} = \frac{1}{16}\beta
\end{equation}
By construction, we have
$\varepsilon_i = 2\usLength{E_i} \; \forall i \leq e$ and
$w_\newvertical/2 = 1/16$
Thus, by combining Equations~\ref{eq:densityM3}~and~\ref{eq:densityB3} and applying Lemma~\ref{lem:invariant}, we get
\begin{flalign*}
\norm{\packedseq} \geq \totalAreaOf{M} + \totalAreaOf{B}
&\geq \frac{1}{8} \usLength{M} + \frac{1}{16} \left(\beta - \frac{3}{16}\right) - \sum\limits_{\newvertical\in\vBase{3}{M}\cap\vClosed}(\frac{w_\newvertical}{2})^2 - \frac{1}{16^2} + \totalAreaOf{\newvertical_3}\\
&\geq \frac{1}{8} \usLength{M\setminus E} + \frac{1}{16} \left(\mu - \sum_{\newvertical\in\vBase{3}{M}\cap\vClosed}\frac{w_\newvertical}{2} \right ) - \frac{1}{16^2} + \totalAreaOf{\newvertical_3}
\end{flalign*}
The claim follows with Lemmas~\ref{lem:mu_and_head}~and~\ref{lem:densityH3}.
\qed
\end{proof}
\begin{lemma}\label{lem:packing_sep3}
The algorithm successfully packs any sequence of $H_3$-squares with total area at most $11/32$.
\end{lemma}
\begin{proof}
The algorithm explicitly assigns an unoccupied space to the next incoming square or vertical shelf until an overflow occurs in $M_4$ or $B_4$.
If an overflow occurred in $M_4$, we would have $\usLength{M_4} = w_{M_4}$ and $M_1$, $M_2$ and $M_3$ are closed.
Thus, $\norm{\packedseq} + \norm{Q} > 1/8 \cdot w_{M\setminus E} = 22/64$ by Lemma~\ref{lem:densityH3}, which contradicts $\norm{\packedseq}\leq11/32$.
Assume we could not fit a square $Q$ into $B_4$ in step 3(b)ii of the algorithm, then $\usLength{B_4} + x_Q > 1/4$ and $B_1$, $B_2$ and $B_3$ are closed. Thus, $\beta + x_Q \geq \sum_i\usLength{B_i} > (7+3+7+4)/16 = 21/16$. However, we only execute Step 3(b)ii if $\beta + x_Q < \mu - \varepsilon + 1/4 = \usLength{M\setminus E}/2 + \usLength{E}/2 - \sum_{i=1}^e 2\usLength{E_i} + 1/4 < 20/16$, which is a contradiction.
Hence, the algorithm successfully packs any sequence of $H_2$-squares.\qed
\end{proof}

\subsubsection{\textit{packSmall} Analysis: Overall Density of Separate $H_{k\geq4}$-Square Packing}\label{sec:packSmallAnalysis_sep4}

In this subsection we analyze the overall packing density for the special case of packing a sequence of squares that all belong to height class $H_k$ for a fixed $k\geq 4$.

\begin{lemma}\label{lem:densityH4}
	If the input sequence contains only $H_k$-squares with $k\geq4$, then
$\norm{\packedseq}
	\geq \frac{1}{8} \usLength{M\setminus E}$
	after each step of the algorithm.
\end{lemma}
\begin{proof}
By the same reasoning as for Equation~\ref{eq:densityM4} in Lemma~\ref{lem:densityH3} we have
\begin{equation}\label{eq:densityM4}
\totalAreaOf{M} \geq \sum\limits_{\newvertical\in\vBase{k}{M}} \totalAreaOf{\newvertical}
\geq \frac{1}{4} \usLength{M}/2 - \sum\limits_{\newvertical\in\vBase{k}{M}\cap\vClosed}(w_\newvertical/2)^2
- \frac{1}{4}\cdot\frac{2^{-k}}{2} + \totalAreaOf{\newvertical_k}
\end{equation}
By construction, we have
\begin{equation}\label{eq:vertBkEq}
|\vHeadB\cap \vBase{k}{M}| = |\vBase{k}{B}|\text{, and }
w_\newvertical = w_\newbuffer \; \forall \newvertical \in \vHeadB\cap\vBase{k}{M},\, \newbuffer \in \vBase{k}{B}\text{.}
\end{equation}
Because we maintain at most one open vertical $H_k$-buffer-shelf ($\newbuffer_k$) at all times, the following equation follows 
with Lemma~\ref{lem:densityClosedVertical} and Equation~\ref{eq:vertBkEq}.
\begin{align}\label{eq:densityVertB4}
\sum\limits_{\newbuffer\in\vBase{k}{B}} \totalAreaOf{\newbuffer} 
&\geq \totalAreaOf{\newbuffer_k} + \sum\limits_{\newbuffer\in\vBase{k}{B}\setminus\{\newbuffer_k\}}\norm{\newbuffer}/2 - (w_\newbuffer/2)^2  \\
&\geq \totalAreaOf{\newbuffer_k}  - \frac{1}{16} \cdot \frac{2^{-k}}{2} - (\frac{2^{-k}}{2})^2 + \sum\limits_{\newvertical\in\vHeadB}(\frac{1}{16} \cdot (w_\newvertical/2) + (w_\newvertical/2)^2)\nonumber
\end{align}
Because Section $B$ only contains vertical $H_k$-shelves and $\beta = \usLength{B} = \sum_{\newvertical\in\vHeadB}w_\newvertical$:
\begin{equation}\label{eq:densityB4}
\totalAreaOf{B}
\geq \totalAreaOf{\newbuffer_k} - \frac{1}{16} \cdot \frac{2^{-k}}{2} - (\frac{2^{-k}}{2})^2
+ \frac{1}{16} \cdot \beta/2 + \sum\limits_{\newvertical\in\vHeadB}(w_\newvertical/2)^2
\end{equation}
Section $A$ contains exactly one horizontal $H_k$-shelf $\newinitial_k$, which is closed before packing $H_k$-squares into $M$. Thus, if $M$ contains at least one vertical $H_k$-shelf, we have
\begin{equation}\label{eq:densityA4}
	\totalAreaOf{A} = \totalAreaOf{\newinitial_k} \geq (1/4 - 2^{-k}/2) \cdot 2^{-k}/2.
\end{equation}
By combining Equations~\ref{eq:densityM4},~\ref{eq:densityB4}~and~\ref{eq:densityA4} and applying
Lemma~\ref{lem:invariant}, we get
\begin{flalign*}
\norm{\packedseq} \geq \totalAreaOf{M} &+ \totalAreaOf{B}+ \totalAreaOf{A}\\
	\geq	\;\frac{1}{8} \usLength{M}
			- &\sum\limits_{\newvertical\in\vBase{k}{M}\cap\vClosed}(w_\newvertical/2)^2
				+ \sum\limits_{\newvertical\in\vHeadB}(w_\newvertical/2)^2
			+ \totalAreaOf{\newbuffer_k} + \totalAreaOf{\newvertical_k} \\
			+ &\frac{1}{16}\left(\beta - \sum\limits_{\newvertical\in\vHeadB\cap\vBase{k}{M}}(w_\newvertical/2)\right)
			- \frac{1}{16} \cdot \frac{2^{-k}}{2} - 2(\frac{2^{-k}}{2})^2 \\
	\geq	\;\frac{1}{8} \usLength{M\setminus E}
			+ &\frac{1}{16} \left(\mu - \sum\limits_{\newvertical\in\vBase{k}{M}\cap\vClosed}\frac{w_\newvertical}{2}\right)
			  + \totalAreaOf{\newbuffer_k} + \totalAreaOf{\newvertical_k} - \frac{1}{16} \cdot \frac{2^{-k}}{2} - 2(\frac{2^{-k}}{2})^2 
\end{flalign*}
The claim follows with Lemmas~\ref{lem:densityHead}~and~\ref{lem:mu_and_head}.
\qed
\end{proof}

\begin{lemma}\label{lem:packing_sep4}
The algorithm successfully packs any sequence of $H_k$-squares with $k\geq 4$ and total area at most $11/32$.
\end{lemma}
\begin{proof}
By the same reasoning as in the proof of Lemma~\ref{lem:packing_sep3}, no overflow occurs in $M$ as long as $\norm{\seq} \leq 11/32$.
Assume we could not fit a vertical $H_k$-shelf into $B_4$, then $\usLength{B_4} + 2^{-k} > 1/4$ and $B_1$, $B_2$ and $B_3$ are closed. Thus, $\beta \geq \sum_i\usLength{B_i} > (7+3+7+4)/16 - 2^{-k} = 21.5/16$. However, for $k\geq 4$, we only execute Step 3(b)ii if $\beta < \mu - \varepsilon + 3/16 = \usLength{M\setminus E}/2 + \usLength{E}/2 - \sum_{i=1}^e 2\usLength{E_i} + 1/4 < 20/16$, which is a contradiction.
Hence, the algorithm successfully packs any sequence of $H_k$-squares.\qed
\end{proof}

\subsubsection{The \textit{packSmall} Algorithm: Mixed Packing of Small Squares}\label{sec:packSmallAlgo_mixed}
In this section we describe the packing created by the \textit{packSmall} Algorithm for the case that the input sequence contains a mixed set of \emph{small} squares.

When receiving squares of different small height classes, not much changes.
We allocate shelves and fill them by placing squares (or vertical subshelves) at the end of their \textit{used sections} according to the \textit{packSmall(k)} algorithm given in Sections~\ref{sec:packSmallAlgo_H2}~to~\ref{sec:packSmallAlgo_H4} for each class separately.
Once we receive a first square of height class $H_k$, we simply start running \textit{packSmall(k)} in parallel to the other  \textit{packSmall} subroutines. For all subsequent $H_k$-squares $Q$ in the input, we simply perform the next step in 
 \textit{packSmall(k)} to pack $Q$.
The variables $\mu$, $\beta$ and $\epsilon_i$ become shared variables. They are initialized once to $0$ in a global Step~0 and are then modified by each of the different subroutines as described above (Steps 1 to 5).
The resulting packing differs from the separate packings in the following two ways.
\begin{itemize}
	\item The main packing regions $M$ and buffer regions $B \cup \Ebuffershelf$ may now contain vertical shelves from a variety of height classes; see Fig.~\ref{fig:mixedShelves}.
	\item Because the vertical shelves for $H_{k\geq3}$ do not fit as nicely into $M$ as is the case in the separate packings, gaps may remain at the end of the packing in each $M_i$. The algorithm uses these gaps for the placement of buffer squares and horizontal buffer shelves; see Fig.~\ref{fig:endBufferPacking} and Step 3(b)i of the algorithm.
\end{itemize}
\begin{figure}[t]
  \centering
  \includegraphics[width = 0.6\textwidth]{./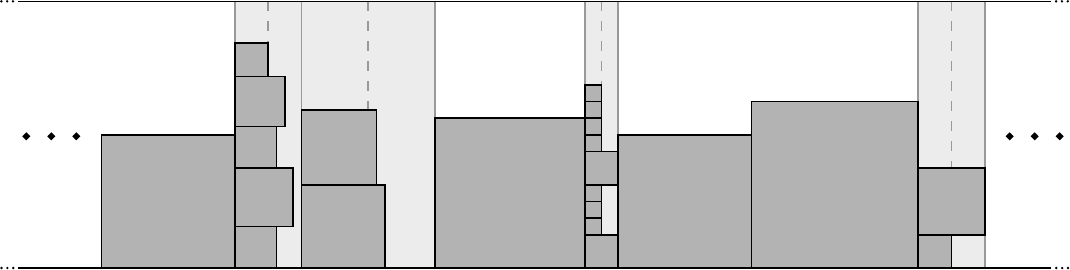}
  \caption{Sample packing of squares and vertical subshelves for mixed small height classes.}
  \label{fig:mixedShelves}
\end{figure}

 \begin{figure}[tp]
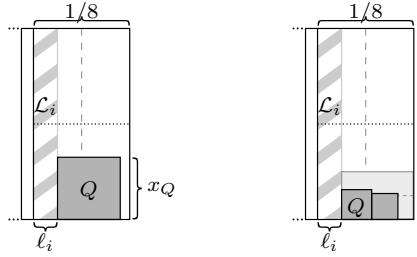

	  \centering
	      \def\svgwidth{0.2\textwidth}
	      \import{./}{end1.tex}
	      \hspace*{1cm}
	      \def\svgwidth{0.2\textwidth}
	      \import{./}{end3.tex}
	\caption{
		The buffer packing performed in the end buffer regions:
		(left) packing of a fitting $H_3$-square and
		(right) subshelf packing of $H_{\geq4}$ squares.
	}
	\label{fig:endBufferPacking}
\end{figure}

\subsubsection{\textit{packSmall} Analysis: Density of Mixed Small Square Packing}\label{sec:packSmallAnalysis_mixed}
In this section we analyze the overall density achieved by the \emph{packSmall} algorithm for any input sequence of small squares. 

\begin{lemma}\label{lem:density_small_mixed}
After each step of the packSmall Algorithm we have
$\norm{\packedseq_s} \geq \frac{1}{8}\usLength{M\setminus E}
+ \frac{1}{16} \cdot \sum_{\newvertical\in\vOpen\cap\vHead}\frac{w_\newvertical}{2}
+ \sum_{k\in K} ( \totalAreaOf{\newvertical_k} - (\frac{2^{-k}}{2})^2 )
+ \sum_{k\in K\setminus{3}} (\totalAreaOf{\newbuffer_k} - (\frac{1}{8} - \frac{2^{-k}}{2})\frac{2^{-k}}{2} )$
\end{lemma}
\begin{proof}
By construction, $M$ only contains $H_2$ squares and vertical $H_{k\geq3}$-shelves.
Thus, $\totalAreaOf{M} \geq \frac{1}{8}\usLength{M} - \sum_{\newvertical\in\vBase{}{M}} \norm{\newvertical} + \sum_{\newvertical\in\vBase{}{M}} \totalAreaOf{\newvertical}$ and with Lemma~\ref{lem:densityAllHkVertical} we get
\begin{equation*}\label{eq:densityMk}
\totalAreaOf{M}
\geq \frac{1}{8} \usLength{M} - \sum\limits_{\newvertical\in\vBase{}{M}\cap\vClosed}(w_\newvertical/2)^2
+\sum_{k\in K}\left(\totalAreaOf{\newvertical_k} - \frac{1}{4}\cdot\frac{2^{-k}}{2}\right)
\end{equation*}
Analogously, we have $\totalAreaOf{B} \geq \frac{1}{16}\usLength{B} - \sum_{\newbuffer\in\vBase{}{B}} \norm{\newbuffer} + \sum_{\newbuffer\in\vBase{}{B}} \totalAreaOf{\newbuffer}$, which together with Lemma~\ref{lem:betaBRel},~Equations~\ref{eq:vertBkEq}~and~\ref{eq:densityVertB4} and $w_\newvertical/2 = 1/16$ for $\newvertical\in\vBase{3}{M}$ implies
\begin{equation*}\label{eq:densityBk}
\totalAreaOf{B}
\geq \frac{1}{16} \left(\beta - \sum_{\newvertical\in\vHeadB}\frac{w_\newvertical}{2}\right) + \sum_{\newvertical\in\vHeadB}(\frac{w_\newvertical}{2})^2
+ \sum_{k\in K\setminus{3}} (\totalAreaOf{\newbuffer_k} - (\frac{1}{8} - \frac{2^{-k}}{2})\frac{2^{-k}}{2} )
\end{equation*}
Let $\bar\varepsilon_i$ be the amount assigned to $\varepsilon_i$ in the first subcase of 3(b)i in \textit{smallPack(3)}.
By construction, we have $\assigned{\Ebuffershelf_i} = (1/16 - \usLength{E_i} )\bar\varepsilon_i$ and 
	\begin{equation}\label{eq:epsValue}
		\varepsilon_i = 
		\begin{cases}
			\bar\varepsilon_i  + \usLength{\Ebuffershelf_i} - \sum_{\newvertical\in\vHeadEi\cap V_{k\geq4}}w_\newvertical/2 &\text{ if $\Ebuffershelf_i$ is open}\\
			\max\{2/16,2/16\usLength{E_i}\} &\text{ if $\Ebuffershelf_i$ is closed}
		\end{cases}
	\end{equation}
If $\usLength{E_i} \geq 1/16$, we have $\vHeadEi = \emptyset$ and $\varepsilon_i := 2\usLength{E_i}$.
Otherwise, $\totalAreaOf{\Ebuffershelf_i} \geq \frac{1}{16}\usLength{\Ebuffershelf_i} - \sum_{\newendbuffer\in\vBase{}{\Ebuffershelf}} \norm{\newendbuffer} + \sum_{\newendbuffer\in\vBase{}{\Ebuffershelf_i}} \totalAreaOf{\newendbuffer}$ and with the same reasoning as for Equation~\ref{eq:densityVertB4} we get
\begin{equation*}\label{eq:densityEk}
\occupied{\Ebuffershelf_i} \geq
\frac{1}{16} (\usLength{\Ebuffershelf_i} - \sum\limits_{\newvertical\in\vHeadEi}\frac{w_\newvertical}{2})
+ \sum\limits_{\newvertical \in \vHeadEi}(\frac{w_\newvertical}{2})^2
- \usLength{E_i}\sum\limits_{\newvertical\in\vHeadEi\cap V_{k\geq4}}\frac{w_\newvertical}{2}
\end{equation*}
By combining the above equations with $\bar\varepsilon_i \leq 2/16$ and $\usLength{E_i} < 1/16$, we get
\begin{equation*}\label{eq:densityEk+}
\totalAreaOf{\Ebuffershelf_i} \geq
\frac{1}{16} (\varepsilon_i - 2\usLength{E_i} - \sum\limits_{\newvertical\in\vHeadEi\cap V_{k\geq4}}\frac{w_\newvertical}{2})+ \sum\limits_{\newvertical \in \vHeadEi}(\frac{w_\newvertical}{2})^2
\end{equation*}
By the same reasoning as for Equation~\ref{eq:densityA4}, we get
\begin{equation*}\label{lem:densityAk}
\totalAreaOf{A} \geq \sum\limits_{k\in K} \totalAreaOf{\newinitial_k} \geq \sum\limits_{k\in K} (1/4 - 2^{-k}/2) \cdot 2^{-k}/2
\end{equation*}
The claim follows with $\norm{\packedseq_s} \geq \totalAreaOf{M} + \totalAreaOf{B} + \totalAreaOf{\Ebuffershelf} + \totalAreaOf{A}$ and Lemmas~\ref{lem:invariant}~and~\ref{lem:mu_and_head}. \qed
\end{proof}

\begin{theorem}\label{thm:packSmall}
    The \textit{packSmall} Algorithm packs any sequence of small
    squares with total area at most $11/32$ into the unit square. 
\end{theorem}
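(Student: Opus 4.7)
The plan is to combine the three structural lemmas about shelf and buffer density to derive a $1/2$ lower bound on the effective density of the main packing region, and then to argue that the only way the subroutine can fail is when the main packing region is essentially full, which by the density bound forces the total input area to exceed $11/32$.

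First I would fix the moment at which the algorithm could conceivably fail. By construction, every incoming small square is handed to an open (or newly opened) shelf: an $H_2$-square goes into the current $M_i$; an $H_{\geq 3}$-square either goes into its initial shelf in $A$ (resp.\ $B$) or is routed between a vertical main shelf in some $M_i$ and a buffer shelf in $B\cup E$ via the $\alpha,\beta$ rule. The only structural obstructions are that (i) some vertical shelf cannot be opened because its cut would collide with the packings for large or medium squares (i.e., would reach into the top left corner of $\unitsq$ inside $M_4$), or (ii) the buffer region cannot provide the space required to close a vertical shelf at density $1/2$. Obstruction (ii) is ruled out by Lemma~\ref{lem:bufferSpaceSufficient} together with Lemma~\ref{lem:H5inSmallBufferGaps}, so the only remaining failure mode is (i).

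Next I would invoke Lemma~\ref{lem:mainPackingDensity} to translate obstruction (i) into an area lower bound. By Lemmas~\ref{lem:closedVerticalShelves} and~\ref{lem:openVerticalShelves}, every vertical shelf in $M\setminus E$, whether closed or open, can be associated with buffer area so that $\totalAreaOf{\newvertical}\ge \norm{\newvertical}/2$; combined with Corollary~\ref{cor:PackingAreaOfShelf} for the horizontal $H_2$-shelves, this yields the invariant that the total area of packed small squares is at least $\norm{\usedSection{\mainPackingArea}\setminus E}/2$. When failure occurs, the first square $Q$ that cannot be accommodated is one that would intersect the reserved top-left corner of $\unitsq$ inside $M_4$; at this instant the main packing has covered $M_1\setminus E_1$, $M_2\setminus E_2$, $M_3\setminus E_3$ and all of $M_4$, an area of $2\cdot\frac{1}{4}\bigl(1-\frac{1}{8}\bigr)+\frac{1}{4}\bigl(\frac{3}{4}-\frac{1}{8}\bigr)+\frac{1}{4}\cdot\frac{7}{8}=11/16$, so the packed squares plus $Q$ have total area strictly greater than $11/32$. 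Contrapositively, any sequence with total area at most $11/32$ is packed successfully.

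The main obstacle is making sure that every case in the algorithm's decision tree has been covered by one of the previous lemmas: in particular that Lemma~\ref{lem:bufferSpaceSufficient} correctly accounts for the buffer demand during the sensitive period before $E_3$ becomes available (handled by Lemma~\ref{lem:endbuffertiming}), and that the argument for obstruction (i) matches the precise geometry of the reserved corner, including the threshold $3/8$ at which the algorithm switches from $M_4$ back to $M_3$. Once those bookkeeping points are verified, the theorem follows directly from the density invariant.
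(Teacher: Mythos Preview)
Your approach is essentially identical to the paper's: invoke Lemmas~\ref{lem:closedVerticalShelves}, \ref{lem:openVerticalShelves}, and \ref{lem:bufferSpaceSufficient} to establish the invariant of Lemma~\ref{lem:mainPackingDensity}, then observe that the only unhandled failure is overflow in $M_4$, at which point $\norm{\usedSection{M}\setminus E}=\norm{\bigcup_{i=1}^{3}(M_i\setminus E_i)\cup M_4}=11/16$, giving total area $>11/32$. One slip to fix: your displayed area formula is wrong because $M_2$ has length $1/2$, not $1$; the correct breakdown is $\tfrac{1}{4}\bigl(\tfrac{7}{8}+\tfrac{3}{8}+\tfrac{5}{8}+\tfrac{7}{8}\bigr)=11/16$, whereas the expression you wrote evaluates to $13/16$.
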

\begin{proof}
Let $Q$ be the next incoming $H_k$-square.
We consider all possible cases in which the algorithm does not explicitly assign an unoccupied space to $Q$.
\begin{enumerate}
\item Assume $Q$ causes an overflow in $M_4$. Then either $k=2$ and $\norm{Q} > 1/8w_Q$ or $k\geq 3$, $\usLength{M\setminus E} + 2^{-k} > 22/16$ and $\totalAreaOf{\newvertical_k} + \norm{Q} > \norm{\newvertical_k}/2 = 1/8 \cdot 2^{-k}$ by construction.
	\item Assume the algorithm cannot open a new vertical buffer shelf  \newbuffer for $H_k$ with $k\geq5$ in $B$. Then $\beta + 2^{-k} > 21/16$ and $\usLength{M\setminus E}/2 = \mu - \usLength{E}/2 > \beta + \varepsilon - \usLength{E}/2 - 3/16 > 21/16 - 2^{-k} + 4.5/16 - 3/16 = 22.5/16 - w_{\newvertical_k}$.
\item Assume $Q$ causes an overflow in $\newvertical_3$ or $\newvertical_4$. Then, by construction, $\newvertical_k \in \vOpen \cap \vHead$, $\totalAreaOf{\newvertical_k} + \norm{Q} > 1/8 \cdot w_{\newvertical_k}\geq 1/8 \cdot 0.5/16$, all $B_i$ regions must have been closed and we have $\beta \geq \sum_i\usLength{B_i} \geq 20/16$. With Lemma~\ref{lem:invariant}, we have $\usLength{M\setminus E}/2 = \mu - \usLength{E}/2 > \beta + \varepsilon - \usLength{E}/2 - 1/4 > 21.5/16$.
\end{enumerate}
In either of the three cases we get $\norm{\seq} > 11/32$ with Lemmas~\ref{lem:densityOpenVertical},~\ref{lem:densityHead}~and~\ref{lem:density_small_mixed}, a contradiction.
Thus, the algorithm successfully packs any sequence of small squares.
\qed
\end{proof}

\subsubsection{\textit{packSmall} Analysis: Some Additional Properties.}\label{sec:packSmallAnalysis_additional}
Before we analyze the algorithms performance in the presence of large and medium squares,
we state a couple of important properties of the packing created with small squares.

Recall that we use variable $\beta$ to quantify the growth of the buffer packing.
    By construction,
    we can relate the length of the buffer region and the total area
    of the input as follows.
\begin{lemma}\label{lem:bufferGrowth}
  Let $Q$ be a small square with side length $x_Q$ in the buffer region $B$
 and let $\packedseq_s$ be the set
  of small squares received so far. Then the total area of the small
  input squares $\norm{\packedseq_s}$ is greater than
$(\beta + \frac{1.5}{16}e + x_Q - 1/16) \cdot 1/4$.
\end{lemma}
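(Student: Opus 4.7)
The plan is to combine the buffer-placement rule stated in Section~\ref{sec:packSmallAlgo} with the main-region density bound of Lemma~\ref{lem:mainPackingDensity}: the buffer rule forces the vertical-shelf width to dominate $\beta + x$, and Lemma~\ref{lem:mainPackingDensity} converts that width into a lower bound on the total packed small-square area.

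First I would argue from the algorithm's decision rule. Because $Q$ was placed in $B$, the algorithm entered case~2 of the buffer-packing rule at the step when $Q$ was processed, and in particular the strict inequality
\[
\beta + x \;<\; \alpha + 1/16 \;=\; v/2 + 1/16
\]
held at that moment, where $\beta$ and $\alpha = v/2$ are the pre-placement values and $v$ is the total width of vertical shelves in $\mainPackingArea$ just before $Q$. Rearranging yields $v > 2(\beta + x - 1/16)$.

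Next I would translate this width bound into an area bound. By construction every vertical shelf is a slice of an $H_2$-shelf, so it has height $1/4$, and the algorithm only carves vertical shelves into portions of $\mainPackingArea \setminus E$ (the end regions $E_i$ are reserved as buffer area; the rule that switches back to $M_3$ as soon as $\length{\usedSection{M_4}} \geq 3/8$ is precisely what keeps vertical shelves clear of $E_3$, and the packing directions used in $M_1$ and $M_2$ do the same for $E_1$ and $E_2$). Consequently the vertical shelves contribute area at least $v/4$ to $\usedSection{\mainPackingArea} \setminus E$, and Lemma~\ref{lem:mainPackingDensity} gives
\[
\norm{\packedseq_s} \;\geq\; \norm{\usedSection{\mainPackingArea}\setminus E}/2 \;\geq\; v/8.
\]
Chaining with the previous inequality yields $\norm{\packedseq_s} \geq v/8 > 2(\beta + x - 1/16)/8 = (\beta + x - 1/16) \cdot 1/4$, which is exactly the claim; strictness is preserved because the buffer rule uses a strict inequality even though Lemma~\ref{lem:mainPackingDensity} is only weak.

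The only delicate point I expect is the bookkeeping convention: $\beta$, $\alpha$ and $v$ must all be read at the instant the buffer rule fires, i.e.\ just \emph{before} the update triggered by $Q$, so that the strict inequality from the rule carries through. A secondary nuisance is confirming that no vertical shelf intrudes into some $E_i$, which is where the ordering rules for the $M_i$ (in particular the $3/8$ threshold on $\usedSection{M_4}$) are needed; once those two bookkeeping issues are made explicit, the argument collapses to the short chain of inequalities above.
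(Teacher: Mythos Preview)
Your overall strategy coincides with the paper's: use the buffer rule to get $\beta + x - 1/16 < \alpha = v/2$, and then convert the vertical-shelf width $v$ into the area bound $\norm{\packedseq_s}\geq v/8$. Your first step is in fact slightly cleaner than the paper's, which splits into the cases $k=3$ and $k\geq 4$; you observe directly that entering case~2 already forces $\beta + x < \alpha + 1/16$.

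The difference, and the gap, is in the second step. You route the area bound through Lemma~\ref{lem:mainPackingDensity}, which only controls $\norm{\usedSection{\mainPackingArea}\setminus E}$. To get $v/8$ out of that you assert that vertical shelves never intrude into the end regions $E_i$. That assertion is not supported by the algorithm: the paper explicitly states that the $E_i$ ``serve as both: parts of the main packing region and additional buffer areas,'' and the entire end-buffer discussion is built around the overlap $\overlapshelf_i = \usedSection{M_i}\cap E_i$, which may well contain vertical shelves. Your appeals to the packing directions in $M_1, M_2$ and to the $3/8$ threshold for $M_4$ do not prevent this; the packing in $M_i$ simply continues into $E_i$ until $M_i$ overflows. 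So the chain $\norm{\usedSection{\mainPackingArea}\setminus E}\geq v/4$ can fail.

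The paper avoids this by not going through Lemma~\ref{lem:mainPackingDensity} at all. It applies Lemmas~\ref{lem:closedVerticalShelves} and~\ref{lem:openVerticalShelves} directly to the union $\Sigma$ of all vertical shelves in $M$, obtaining $\norm{\packedseq_s}\geq \totalAreaOf{\Sigma}\geq \norm{\Sigma}/2 = (v\cdot 1/4)/2 = \alpha\cdot 1/4$, independent of whether any of those shelves sit inside $E$. Replacing your detour via Lemma~\ref{lem:mainPackingDensity} with this direct appeal repairs the argument and makes the location claim about $E$ unnecessary.
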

    \begin{proof} 
      By construction, we only pack an $H_3$-square into $B$ if
      $\beta + 2\usLength{E} + \varepsilon + x_Q < \mu + 1/4$.
      For $Q\in H_k$ with $k\geq4$ we have $x_Q < 1/16$ and we only extend the buffer packing if
	$\beta +  2\usLength{E} + \varepsilon < \mu + 3/16$.
	In either case we get $\beta + 2\usLength{E} + \varepsilon + x_Q - 1/16 < \mu$.
     Recall that $\mu$ is defined as the total width of the vertical shelves in $M$.
     Thus, with Lemmas~\ref{lem:invariant},~\ref{lem:mu_and_head},
~and~\ref{lem:density_small_mixed} and Equation~\ref{eq:epsValue} we get
     \begin{eqnarray*}
	  \norm{\packedseq_s} \geq 1/8 \usLength{M\setminus E} &\geq& ( \mu - \sum_{\newvertical\in \vBase{}{E}} w_\newvertical /2 ) \cdot 1/4\\
				      & > & (\beta + \varepsilon+ x_Q - 1/16  - \usLength{E}/2 ) \cdot 1/4
     \end{eqnarray*}
     \qed
     \end{proof}
    As a direct implication of Lemma~\ref{lem:bufferGrowth} and the fact that
    when $B_4$ is first used for buffer square placement in step 3(b),
     both end buffer regions $\bar E_1$ and $\bar E_2$ have
     successfully been closed by the algorithm before,
    we get the following
    lower bounds for the total area $\norm{\packedseq_s}$ of small squares packed,
    as a function of the total length of the packing in $B$. 
    \begin{property}\label{prop:squareInB2}
      Let $Q$ be a small square in the buffer region $B_2$ with side
      length $x$ and distance $d > 1/4$ to the left boundary of \unitsq,
      then $\norm{\packedseq_s} > (d + x - 1/16) \cdot 1/4$.
    \end{property}
    \begin{property}\label{prop:squareFirstInB3}
      If there is a small square in $B_3$, Then $\norm{\packedseq_s} > 7/64$.
    \end{property}
    \begin{property}\label{prop:squareNotFirstInB3}
      Let $Q$ be a small square in $B_3$ with side length $x$ that was
      packed in a distance $d>0$ to the bottom of $B_3$.
      Then $\norm{\packedseq_s} > (7.5/16 + d + x) \cdot 1/4$.
    \end{property}
    \begin{property}\label{prop:squareFirstInB4} 
      If there is a small square in $B_4$, then $\norm{\packedseq_s} > 17/64$.
    \end{property}
    \begin{property}\label{prop:squareNotFirstInB4}
      Let $Q$ be a small square in $B_4$ with side length $x$ and
      distance $d>0$ from the bottom of $B_4$.
      Then $\norm{\packedseq_s} > (1 + d + x) \cdot 1/4$.
    \end{property}
The following properties follow directly from the algorithm invariant of Property~\ref{prop:mainPackingDensity}.
   \begin{property}\label{prop:density_M_1}
	When the first small square is packed into $M_2$, then $\norm{\packedseq_s} \geq 7/64$.
    \end{property}
      \begin{property}\label{prop:density_lowerhalf}
	When the first small square is packed into $M_3$,
	then $\totalAreaOf{\lowerhalf} \geq 10/64$.
      \end{property}

\subsection{Combined Analysis}\label{sec:algAnalysis}
In the previous sections we proved that the algorithm successfully packs 
small, medium and large squares separately, as long as input has a total
area of at most $11/32$.
A case distinction over all possible collisions that may appear between the
packings of these height classes can be used to prove the main result.

\begin{theorem}\label{thm:11over32}
 The Recursive Shelf Algorithm packs any sequence of squares with total area
 at most $11/32$ into the unit square.
\end{theorem}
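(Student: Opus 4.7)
The plan is to combine the three subroutine theorems by showing that, for total input area at most $11/32$, the three packings can coexist in the unit square without obstructing one another. By Theorems~\ref{thm:ceiling} and~\ref{thm:packSmall}, the medium and small packings individually succeed within their assigned regions whenever their respective class totals are within $3/8$ and $11/32$; a single large square is trivially placed in the top-right corner. Hence the only thing that can cause failure is a geometric collision between packings of different classes, and I would prove that every such collision forces the total input area above $11/32$.

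I would begin by ruling out the easy cases. Since a large square has area exceeding $1/4$, its presence leaves less than $3/32$ for the remaining squares; in particular no medium square can then occur, so large--medium collisions are ruled out immediately. If no large and no medium squares appear, Theorem~\ref{thm:packSmall} directly implies the claim. Moreover, at most five medium squares can ever occur, since six would already exceed $11/32$; consequently the L-shaped ceiling packing stays confined to the top and right-most $1/4$-wide strips of the unit square.

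The main work is a region-by-region case analysis for collisions between the small packing and the ceiling or large packing, which can only occur near the top of \unitsq. A collision is either (i) between a small square in some $M_i$ or $E_i$ and the descending column of medium squares, (ii) between a square in a buffer region $B_2 \cup B_3 \cup B_4$ and the medium column, or (iii) between a small square and the large square in the top-right corner. For each configuration, I would use the density invariant of Lemma~\ref{lem:mainPackingDensity}, together with the bookkeeping Properties~\ref{prop:squareInB2}--\ref{prop:squareNotFirstInB4} and~\ref{prop:density_M_1},~\ref{prop:density_lowerhalf}, and Lemma~\ref{lem:bufferGrowth}, to deduce a sharp lower bound on small area already packed at the moment of collision; adding the area of the colliding medium or large square then exceeds $11/32$. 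Note in particular that the switching rule preferring $M_3$ over $M_4$ once $\usedSection{M_4}$ reaches length $3/8$ is precisely calibrated to keep the small packing away from the column of medium squares descending along the right edge.

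The hard part will be the bookkeeping for collisions between small-square buffer packings in $B_2\cup B_3\cup B_4 \cup E$ and the medium column, because the buffer grows according to the $\alpha/\beta$ rule rather than by simple geometry, and several height classes may contribute simultaneously. The sharpness of Properties~\ref{prop:squareInB2}--\ref{prop:squareNotFirstInB4}, combined with the fact that the medium column reaches the height of $B_3$ (resp.\ $B_4$) only after sufficiently many medium squares have been placed, should make each subcase work out to just exceed $11/32$. Once every collision scenario is excluded in this way, no subroutine is ever blocked for inputs of total area at most $11/32$, each subroutine terminates successfully by its own theorem, and the Recursive Shelf Algorithm packs the entire input, as claimed.
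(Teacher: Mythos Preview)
Your overall approach---combining the three subroutine theorems and then excluding every cross-class collision by showing it forces total area above $11/32$---is exactly the paper's strategy, and you correctly identify Lemma~\ref{lem:mainPackingDensity}, Lemma~\ref{lem:bufferGrowth}, and Properties~\ref{prop:squareInB2}--\ref{prop:density_lowerhalf} as the bookkeeping tools for the small-vs-medium and small-vs-large cases.

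However, there is a genuine gap in your dismissal of the large--medium case. You write that a large square ``leaves less than $3/32$ for the remaining squares; in particular no medium square can then occur.'' This is false: a medium square has side length in $(1/4,1/2]$ and hence area strictly greater than $1/16 = 2/32$, which is \emph{less} than $3/32$. So one large square and one medium square can coexist under the $11/32$ budget, and the paper in fact treats their potential collision explicitly (the large square is placed in the top-right corner, and if the row of medium squares already extends far enough to the right to obstruct it, one computes $\norm{\packedseq}+\norm{Q_0} > x_0^2 + (1-x_0)\cdot\tfrac{1}{4} \geq 3/8 > 11/32$). You must restore this case to your analysis.

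A second, smaller inaccuracy: your claim that ``the L-shaped ceiling packing stays confined to the top and right-most $1/4$-wide strips'' is not correct, since medium squares can have side length up to $1/2$. The ceiling packing can occupy up to a $1/2$-wide band along the top and right, and this matters in the collision analysis with $M_2$, $M_3$, $M_4$ (the paper's Lemma~\ref{lem:coll_H1_small_upperhalf} and the subcases where $Q_1$ reaches into $M_1$ or $M_2$). Your subcase list (i)--(iii) should accordingly distinguish collisions with the top row of the ceiling packing from collisions with its descending right column, as the paper does.
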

      We prove the claim by showing that if the algorithm fails to pack a square,
      the total area of the given squares must exceed $11/32$.
    In the following we analyze the packing density at the time a
    collision of the different packing subroutines would appear.
    First we consider a collision between a medium and a small square in the upper half \upperhalf
	of the unit square container.
      \begin{figure}[t]
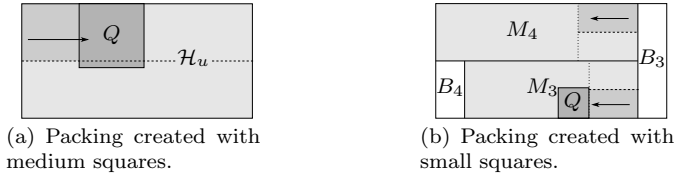

	  \centering
	  \subfigure[Packing created with medium squares.]{
 	      \def\svgwidth{0.25\textwidth}
 	      \import{./}{upperhalfCollision1.tex}
	      \label{fig:upperHalfcol1}
	  }
	  \hskip 2cm
	  \subfigure[Packing created with small squares.]{
 	      \def\svgwidth{0.25\textwidth}
 	      \import{./}{upperhalfCollision2.tex}
	      \label{fig:upperHalfcol2}
	  }
	  \caption[Packing performed in the upper half of \unitsq.]
	      {Packing performed in the upper half of \unitsq.
	      The feasible packing area has light gray background,
	      the medium gray part represents the packing created before $Q$ was placed.}
	\vspace*{-3mm}
      \end{figure}
     
    \begin{lemma}\label{lem:coll_H1_small_upperhalf}
      If a medium square $Q_1$ collides with a part of the packing constructed
      with small squares, then $\norm{Q_1} + \totalAreaOf{\upperhalf} \geq 6/32$.
    \end{lemma}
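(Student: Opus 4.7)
Let $x_1$ denote the side length of $Q_1$; then $1/4<x_1\leq 1/2$, so $\norm{Q_1}=x_1^2>1/16$. If $x_1\geq\sqrt{3}/4$, then $\norm{Q_1}\geq 3/16=6/32$ and the claim is immediate. So I would restrict attention to the critical regime $1/4<x_1<\sqrt{3}/4$, where the contribution of $\totalAreaOf{\upperhalf}$ is needed. The overall plan is to perform a case analysis on where the Ceiling Packing Algorithm is trying to place $Q_1$ at the moment of collision and, in each case, pin down a lower bound on the used length of the small-square packing inside $\upperhalf$.

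Because small squares are packed into the main region in the order $M_1,M_2,M_3,M_4$, and because $M_3\cup M_4$ sits entirely inside $\upperhalf$ (see Fig.~\ref{fig:smallPartition}), any small square that obstructs $Q_1$ in $\upperhalf$ must be located either inside $M_3\cup M_4$ or inside a buffer slice lying above the midline; in every instance its area already contributes to $\totalAreaOf{\upperhalf}$. I would then split into two subcases according to the position of $Q_1$ within the L-shaped ceiling packing (Fig.~\ref{fig:ceiling1}): either $Q_1$ belongs to the horizontal top row, in which case it occupies a $x_1$-tall strip adjacent to the top of \unitsq{} and the obstructing small square must sit in the topmost $x_1$-band of $M_4$ (or of an upper buffer region); or $Q_1$ belongs to the vertical right column, in which case it occupies a $x_1$-wide column on the right and the obstructing small square must sit in the rightmost used portion of $M_3$ or $M_4$.

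In each subcase, the shelf packing order inside $M_3$ and $M_4$ forces the used section to have already swept across a length determined by the collision position. Applying Lemma~\ref{lem:mainPackingDensity} localized to $M_3\cup M_4\subset\upperhalf$ (together with the associated buffer slices lying in $\upperhalf$), one obtains a lower bound on $\totalAreaOf{\upperhalf}$ proportional to that used length. Combining this with $\norm{Q_1}=x_1^2$ and exploiting $x_1>1/4$ to close the marginal cases, I would verify that the sum meets the target $6/32$. Properties~\ref{prop:density_M_1} and~\ref{prop:density_lowerhalf} serve as auxiliary checkpoints to confirm that, by the time the algorithm has been forced into $M_3$ (and possibly $M_4$), the packing has indeed advanced enough to supply the requisite area.

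The main obstacle is the localization step: Lemma~\ref{lem:mainPackingDensity} certifies density $1/2$ over $\usedSection{M}\setminus E$ as a whole, while the target bounds $\totalAreaOf{\upperhalf}$ specifically. The plan for handling this is to inspect the charging scheme developed in Section~\ref{sec:packSmallAnalysis}: the buffer assignments for vertical shelves inside $M_3\cup M_4$, the end-buffer assignments for $E_3$, and the initial-buffer assignments for open shelves all draw from areas that can be located in $\upperhalf$ (or harmlessly replaced by areas that are), so the density bound of $1/2$ restricts to the upper-half portion of the used section of $M_3\cup M_4$ without loss. This localized version of Lemma~\ref{lem:mainPackingDensity}, combined with the geometric case analysis on $Q_1$'s position, yields the desired inequality.
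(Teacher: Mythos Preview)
Your plan has a genuine gap: the case split and the ``localization'' step miss the structural ingredient that actually does the work in the paper's argument.

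First, the case distinction you propose (top row versus right column of the ceiling packing) is not the relevant one. Collisions treated by this lemma occur with small squares in $M_3$ or $M_4$, and the paper's six cases are organized instead by \emph{what} $Q_1$ hits (an $H_2$-square $Q_2$ versus a vertical subshelf $\newshelf$) and \emph{where} ($M_3$ versus $M_4$), with further subcases. This matters because the key inequality in each case comes from the alternating packing rule for $M_3$ and $M_4$: whenever an object is placed in $M_4$, the used section of $M_3$ is at least as long (up to the width of the last object), and vice versa, with the additional proviso that once $\usedSection{M_4}$ reaches $3/8$ we revert to $M_3$. This rule is exactly what forces \emph{both} $M_3$ and $M_4$ to be filled all the way to the collision point, so that the whole right side of $\upperhalf\setminus B_3$ carries density $1/2$. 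Your plan never invokes this rule, and without it Lemma~\ref{lem:mainPackingDensity} only tells you that the global used section of $M$ is half full, not that the portion of $M_3\cup M_4$ to the right of the collision is.

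Second, you omit the contribution of the \emph{other} medium squares already packed to the left of $Q_1$. In the paper's computations the left part of $\upperhalf$ (width roughly $7/8$ minus the small-square used length) is filled to height at least $1/4$ by the ceiling packing, and this term is essential: in several cases the bound reads $\totalAreaOf{\upperhalf}>\frac{(7/8-t)\cdot 1/2}{2}+(\text{collision-object area})$, balancing the medium contribution on the left against the small contribution on the right. Trying to get to $6/32$ from $\norm{Q_1}=x_1^2$ alone plus a one-sided small-square bound will not close the inequality in the regime $x_1$ near $1/4$. (Also note that in the paper $Q_1$ itself is counted inside $\totalAreaOf{\upperhalf}$, so the shortcut ``$x_1\ge\sqrt{3}/4$ already gives $6/32$'' is not how the argument is meant to be read.)
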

    \begin{proof} 
      Recall that we pack the medium sized squares from left to right aligned with the
      top boundary of \upperhalf; see Fig.~\ref{fig:upperHalfcol1}.
      The packing of small squares (into $M_3$ and $M_4$) is performed from right to left;
      see Fig.~\ref{fig:upperHalfcol2}.
      Also recall that we alternatingly use $M_3$ and $M_4$ as the current main packing
      region
      (choosing which ever half is less full in width) until the packing in $M_4$ reaches
      a total length at least $3/8$.
      Then we only pack $M_3$ until it is completely filled, before finishing the packing
      in $M_4$.
      \begin{figure}[t]
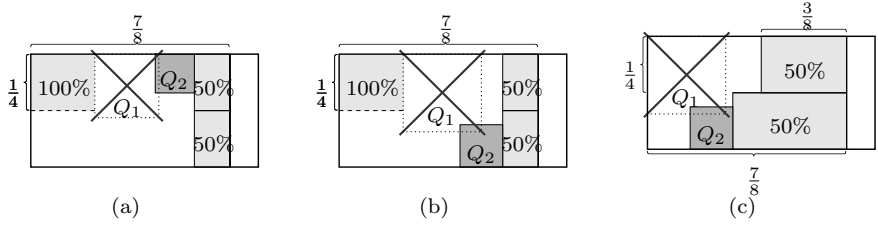

	  \centering
	  \subfigure[]{
	      \def\svgwidth{0.3\textwidth}
	      \import{./}{upper1.tex}
	      \label{fig:h1_h2_M4_collision}
	  }
	  \subfigure[]{
	      \def\svgwidth{0.3\textwidth}
	      \import{./}{upper2.tex}
	      \label{fig:h1_h2_M3_collision1}
	  }
	  \subfigure[]{
		\def\svgwidth{0.3\textwidth}
		\import{./}{upper3.tex}
		\label{fig:h1_h2_M3_collision2}
	  }
	  \caption{Collision of a medium square $Q_1$ with an $H_2$-square $Q_2$ in
		  \upperhalf.
	  }
      \end{figure}

      Let $Q_1$ be a medium square that collides with a small square in
      the upper half \upperhalf of the unit bin.
      Then $Q_1$ either intersects a vertical shelf \newshelf or an $H_2$-square $Q_2$.
      The main idea is to prove that the parts of $\upperhalf\setminus B_3$ both right and
      left to $Q_2$/\newshelf have a density of $1/2$.
      We distinguish six different cases depending on the location of \newshelf or $Q_2$ in
      \upperhalf.
      \begin{enumerate}
 	\item{\em $Q_1$ collides with an $H_2$-square $Q_2$ in $M_4$:}\\
	    We know $\usLength{M_3} > \usLength{M_4} - w_\newshelf$,
	    as otherwise we would have packed $Q_2$ in $M_3$.
	    Therefore, the entire part of $M_3 \cup M_4$ to the right of $Q_2$ must be used by
	    small squares, thus having a density of $1/2$.
	    Additionally, the section used by the $H_1$-squares must be filled to a height of
	    at least $1/4$.
	    Hence, we know that the sections of $\upperhalf\setminus B_3$ both right and left
	    to $Q_2$ are half full; see Fig.~\ref{fig:h1_h2_M4_collision}.
	    Therefore, with $x_2\geq 1/8$,
	    \begin{eqnarray*}
		\totalAreaOf{\upperhalf} &>&	\frac{(7/8-x_2)\cdot 1/2}{2} + x_2^2 \\
					&\geq&	\frac{7}{32} + x_2\left(x_2-\frac{1}{4}\right) \\
					&\geq&	\frac{7}{32} +
			      \frac{1}{8}\cdot\left(\frac{1}{8}-\frac{1}{4}\right)
					> \frac{6}{32}.
	    \end{eqnarray*}
	\item{\em$Q_1$ collides with an $H_2$-square $Q_2$ in $M_3$:}\\
		By construction we have
		  $\usLength{M_3} - x_2 \leq \usLength{M_4}$
		  or $\usLength{M_4} \geq 3/8$,
		as we packed $Q_2$ into $M_3$ instead of $M_4$.
		\begin{enumerate}
		  \item If $\usLength{M_3} - x_2 \leq \usLength{M_4}$,
			the situation is symmetric to the one in the previous case;
			see Fig.~\ref{fig:h1_h2_M3_collision1}.
			Because $Q_1$ is aligned with the top and $Q_2$ with the bottom of
			\upperhalf and $Q_1$ and $Q_2$ collide, we have $x_1 + x_2 > 1/2$.
			Thus, we get
		    \begin{eqnarray*}
			\totalAreaOf{\upperhalf} &>& \frac{(7/8-x_1-x_2)\cdot 1/2}{2}
							    + x_1^2 + x_2^2 \\
						 &\geq& \frac{7}{32} - \frac{x_1 + x_2}{4} +
							    \frac{(x_1 + x_2)^2}{2} \\
						 &>& \frac{7}{32}
						      - \frac{(x_1 + x_2)\cdot 1/2}{2}
						      + \frac{(x_1+ x_2)\cdot 1/2}{2}
						      > \frac{6}{32}.
		    \end{eqnarray*}
	      \item Otherwise, $\usLength{M_3}-x_2 >
				 \usLength{M_4}\geq3/8$;
		    see Fig.~\ref{fig:h1_h2_M3_collision2}. 
		    Again, we know $x_1 + x_2 > 1/2$.
		    Thus,
		    \begin{eqnarray*}
		      \totalAreaOf{\upperhalf} &\geq& x_1^2 + x_2^2
						    + \frac{\norm{\usedSection{M_3}}}{2}
						    + \frac{\norm{\usedSection{M_4}}}{2}\\
					       &>& \frac{(x_1 + x_2)^2}{2}
						    + 2\cdot\frac{\norm{\usedSection{M_4}}}{2}\\
					       &>& \frac{(1/2)^2}{2}
						    + \frac{3}{8}\cdot\frac{1}{4} > \frac{6}{32}.
		    \end{eqnarray*}
	    \end{enumerate}
	  \begin{figure}[t]
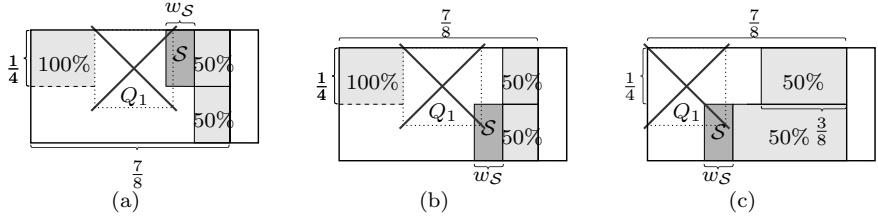

	    \centering
	    \subfigure[]{
		  \def\svgwidth{0.3\textwidth}
		  \import{./}{upper2v.tex}
		  \label{fig:h1_shelf_M4_collision}
	    }
	    \subfigure[]{
		\def\svgwidth{0.3\textwidth}
		\import{./}{upper3v.tex}
		\label{fig:h1_shelf_M3_collision1}
	    }
	    \subfigure[]{
		  \def\svgwidth{0.3\textwidth}
		  \import{./}{upper1v.tex}
		  \label{fig:h1_shelf_M3_collision2}
	      }
	    \caption{Collision of a medium square $Q_1$ with vertical shelf \newshelf
		      in \upperhalf.
		    }
	    \vspace*{-3mm}
	  \end{figure}
	\item{\em$Q_1$ collides with a vertical shelf \newshelf in $M_4$:}\\
	    Analogously to the first case, we know that the sections of $\upperhalf\setminus B_4$
	    both right and left to \newshelf must be half full;
	    see Fig.~\ref{fig:h1_shelf_M4_collision}.
	    Thus, with $w_\newshelf \leq 1/8$ we get:
	    \begin{eqnarray*}
		\totalAreaOf{\upperhalf} &>& \frac{(7/8-w_\newshelf)\cdot 1/2}{2}
						+ \frac{\norm{\newshelf}}{2}\\
					 &\geq&	\frac{7}{32} - \frac{w_\newshelf}{4}
						    \geq \frac{6}{32}.
	    \end{eqnarray*}
	\item{\em $Q_1$ collides with a vertical shelf \newshelf in $M_3$:}\\
	      Analogously to the second case, we must have 
	      $\usLength{M_3} - w_\newshelf \leq \usLength{M_4}$
	      or $\usLength{M_4} \geq 3/8$
	      as we opened \newshelf in $M_3$.
	    \begin{enumerate}
	      \item If $\usLength{M_3} - w_\newshelf \leq
			      \usLength{M_4}$,
		    then we have the same conditions as described in the first case;
		    see Fig.~\ref{fig:h1_shelf_M3_collision1}.
		    We analogously get
			$$\totalAreaOf{\upperhalf} >
\left(\frac{7}{8}-w_\newshelf\right)\frac{1}{4} 
						  \geq \frac{6}{32}.$$
	      \item Otherwise, $\usLength{M_3} - w_\newshelf
					    > \usLength{M_4} \geq 3/8$.
		    Let $\ell^1$ be the length of the \upperhalf-section used by $H_1$.
		    We know $\ell^1 \geq 1/4$ and $\usLength{M_3} > 7/8 - \ell^1$;
		    see Fig.~\ref{fig:h1_shelf_M3_collision2}.
		    Thus,
		    \begin{eqnarray*}
			\totalAreaOf{\upperhalf}	&\geq& \ell^1\cdot\frac{1}{4} +
							    \frac{\norm{\usedSection{M_3}}}{2} +
							    \frac{\norm{\usedSection{M_4}}}{2} \\
						    &>&	\ell^1\cdot\frac{1}{4} +
							    \frac{(7/8 - \ell^1)\cdot 1/4}{2} +
							    \frac{3/8\cdot 1/4}{2} \\
						    &\geq&	\frac{\ell^1}{8} + \frac{7}{64} +
							    \frac{3}{64} \geq \frac{6}{32}.
		    \end{eqnarray*}
		    \qed
	    \end{enumerate}
      \end{enumerate}
    \end{proof}
   We are now able to prove Theorem~\ref{thm:11over32}.
   \begin{proof}[of Theorem~\ref{thm:11over32}]
      Let $Q$ be the square at which the algorithm stops.
      Denote \seq the set of all input squares and \packedseq the set of all squares
      packed at the time $Q$ arrives. We claim $\norm{Q} + \norm{\packedseq} > 11/32$.
      To prove this statement we distinguish the different types of collisions that might
      cause the algorithm to stop with failure.
	Note that we covered the cases in which \seq consists of either all large, all medium or all small squares in the previous sections. In the following we denote $\packedseq_s$ the set of all small squares in \packedseq and  $\packedseq_m$ the set of all medium squares in \packedseq.
      \begin{enumerate}
	\item \emph{A large square $Q_0$ collides with a medium square $Q_1$:}\\
	      In this case, the first (and only) square of $H_0$ collides with the L-shaped packing
	      produced by the Ceiling Algorithm; see Fig.~\ref{fig:11_32_proof_h1_h0}.
	      We know $\norm{Q_0} > (1/2)^2 = 1/4$ and
	      the shelf packing for the $H_1$-squares must reach from the left boundary
	      to more than a distance of $x_0$ from the right boundary.
	      Thus, as $x_i \geq 1/4$ for any square $Q_i \in H_1$,
	      the total area of the input sequence $\norm{\seq}$ is at least
	      $\norm{\packedseq} + \norm{Q_0} >  x_0^2 + (1-x_0)\cdot\frac{1}{4} \geq 3/8 > 11/32$.
      \item \emph{A large square $Q_0$ collides with a small square $Q_s$:}\\
	      If the side length $x_0$ of $Q_0$ is greater than $\sqrt{11/32}$,
	      then $\norm{Q_0} > 11/32$ and we are done.
	      Therefore, we assume $x_0 \leq \sqrt{11/32} < 5/8$.
	      There are two cases:
	      \begin{enumerate}
	      \item \emph{$Q_s$ is in the main packing area:}\\
		    Because $x_0<5/8<3/4$, $Q_s$ must have been packed into $M_2$, $M_3$ or $M_4$.
		    In any case, a small square must be in $M_2$ and by Property~\ref{prop:density_M_1}
		    we have a total area of more than $7/64$ from small squares. Additionally,
		    we have $x_0 \geq 1/2$, as $Q_0$ is large. That is,
		    $\norm{\seq} \geq \norm{Q_0} + \norm{\packedseq_s}
			  > (1/2)^2 + 7/64 = 23/64 > 11/32$;
		    Fig.~\ref{fig:11_32_proof_h0_small1}.
	      \item \emph{$Q_s$ is in the buffer area:}\\
			We have that $Q_s$ is not in $B_1$ since $x_0 < 5/8$.
		    If $Q_s$ is located in $B_3$ or $B_4$, then $\norm{\packedseq_s} > 7/64$ according to Property~\ref{prop:squareFirstInB3} and $\norm{\seq} > (1/2)^2 + 7/64 > 11/32$.
		   Otherwise, $Q_s$ is located in $B_2$.
		    Let $d$ be the distance of $Q_s$ to the left boundary of the unit square.
		    As $Q_0$ and $Q_s$ collide, we have $d + x_s + x_0 > 1$;
		    see Fig.~\ref{fig:11_32_proof_h0_small2}.
		    We distinguish two cases for the side length of $Q_0$:
		    \begin{enumerate}
		      \item $x_0 \in (1/2,9/16)$:
			    Then $d + x_s > 1 - x_0 > 7/16$, which implies $d>7/16-1/8>1/4$.
			    Thus, by Property~\ref{prop:squareInB2} we get
			    $$\norm{\packedseq_s} > \frac{d + x_s - 1/16}{4}
					      > \frac{7/16 - 1/16}{4} = 6/64$$
		      \item $x_0 \in [9/16,5/8)$:
			    Then $d + x_s > 1 - x_0 > 3/8$, which implies $d>3/8-1/8=1/4$.
			    Thus, by Property~\ref{prop:squareInB2} we get
			    $$\norm{\packedseq_s} > \frac{d + x_s - 1/16}{4}
					      > \frac{3/8 - 1/16}{4} = 5/64$$
		    \end{enumerate}
		    In total we get\\
			$\norm{\seq} \geq \norm{Q_0} + \norm{\packedseq_s}
					> \min\{(1/2)^2 + 6/64, (9/16)^2 + 5/64\} = 11/32$.
	      \end{enumerate}
      \item \emph{A medium square $Q_1$ collides with a small square $Q_s$:}\\
	      There are many different types of collisions that might appear between
	      the small square packing and a square of $H_1$.
	      Note that the ceiling packing never interacts with the buffer area of \lowerhalf,
	      but might interact with the buffers in \upperhalf.
	      \begin{enumerate}
		  \item \emph{$Q_s$ is (a buffer square) in $B_3$:}\\
			Recall that all medium squares are packed from left to right
			aligned with the top boundary of \unitsq.
			Let $d$ be the distance of $Q_s$ to the lower boundary of $B_3$.

			We distinguish two cases for $x_s + d$:
			\begin{enumerate}
			  \item $x_s + d \leq 1/8$:
				Then $Q_1$ intersects the $1/8$-high section at the bottom 
				of $B_3$; see Fig.~\ref{11_32_proof_h1_b3_1}.
				That is, either an overflow of $H_1$-squares occured in
				\upperhalf and we have $\norm{\packedseq_m} > 1\cdot1/4$,
				or $Q_1$ coincides with the top of \unitsq, which implies $x_1 > 3/8$,
				and we have
				  $$\norm{\packedseq_m} > x_1^2 + (\frac{7}{8}-x_1)\cdot\frac{1}{4}
				      \geq \frac{7}{32} + x_1\cdot(x_1-\frac{1}{4})
				      \geq \frac{14}{64} + \frac{3}{8}\cdot\frac{1}{8}
				      = \frac{17}{64} > \frac{1}{4}.$$
				As $Q_s$ is in $B_3$ we get $\norm{\packedseq_s} > 7/64$
				by Property~\ref{prop:squareFirstInB3}.
				In both cases, we get
				    $$\norm{\seq} \geq \norm{\packedseq_m} + \norm{\packedseq_s}
						  > 1/4 + 7/64
						  = 11/32.$$
			  \item $x_s + d > 1/8$:
				This case is depicted in Fig.~\ref{11_32_proof_h1_b3_2}.
				By Property~\ref{prop:squareNotFirstInB3} we get
				    $$\norm{\packedseq_s} > \frac{7.5/16 + d + x_s}{4}
				      > \frac{7.5/16 + 2/16}{4} = \frac{9.5}{64}.$$
				Because $Q_1$ intersects with $B_3$, we know that the
				total length of the medium square packing is greater than
				$7/8$.
				Thus we get
				  $$\norm{\seq} \geq \norm{\packedseq_m} + \norm{\packedseq_s}
					     > 7/8 \cdot 1/4 + 9.5/64 > 23.5/64 > 11/32.$$    
			\end{enumerate}
		  \item \emph{$Q_s$ is (a buffer square) in $B_4$:}\\
			Recall that we start packing medium squares coinciding with
			the top of \unitsq.
			We fill the buffer region $B_4$ from bottom to top.
			Let $d$ be the distance of $Q_s$ to the lower boundary of $B_4$.
			We distinguish two cases for $x_s + d$:
			\begin{enumerate}
			  \item $x_s + d \leq 1/8$:
				Then $Q_1$ perturbs the $1/8$-high section at the bottom
				of $B_4$, i.e. we have $x_1 > 3/8$.
				Because $Q_s$ is in $B_4$, we get $\norm{\packedseq_s} > 17/64$
				by Property~\ref{prop:squareFirstInB4}.
				Thus, in total we have
				    $$\norm{\seq} \geq \norm{Q_1} + \norm{\packedseq_s}
					      > (3/8)^2 + 17/64 = 26/64 > 11/32.$$
			  \item $x_s + d > 1/8$:
				By Property~\ref{prop:squareNotFirstInB4} we have
				    $$\norm{\packedseq_s} > \frac{1 + d + x_s}{4}
					> \frac{1 + 1/8}{4} = \frac{9}{32}.$$
				Because $Q_1$ is a medium square, we have $x \geq 1/4$
				and get
				  $$\norm{\seq} \geq \norm{Q_1} + \norm{\packedseq_s}
					      > (1/4)^2 + 9/32 = 11/32.$$    
			\end{enumerate}
		  \item \emph{$Q_s$ is packed into $M_3$ or $M_4$:}\\
			By Property~\ref{prop:density_lowerhalf} and Lemma~\ref{lem:coll_H1_small_upperhalf}
			we have $\totalAreaOf{\lowerhalf} \geq 5/32$ and
			$\totalAreaOf{\upperhalf} \geq \totalAreaOf{M_3\setminus E_i\cup M_4}
						  \geq 6/32$, respectively.
			Therefore, $\totalAreaOf{\unitsq} \geq 11/32$.
		  \item \emph{$Q_s$ is a buffer square in $E_i$:}\\
			We start treating the end $E_i$ of a main packing area $M_i$
			only if $M_i\setminus E_i$ is fully \emph{used}.
			Therefore, this type of collision can be handled analogously
			to the collision of $Q_0$ with a square in $M_i$.

		  \item \emph{$Q_1$ overlaps with $M_2$ but not with $M_1$:}\\
		      This only happens if $Q_1$ provokes an overflow in the upper
		      half of \unitsq and is therefore packed into the second shelf
		      of the Ceiling Packing; see Fig.~\ref{fig:11_32_proof_h1_small1}.
		      In this case, the total area of squares from $H_1$ is greater
		      than $1/4$.
		      By assumption, $Q_s$ is placed in $M_2$ and we get an additional
		      packing area of at least $7/64$ from small squares; see
		      Property~\ref{prop:density_M_1}.
		      In total, we have
		      $\norm{\seq} \geq \norm{\packedseq_m} + \norm{\packedseq_s}
				  > 1/4 + 7/64 > 11/32$.
		  \item \emph{$Q_1$ overlaps with $M_1$:}\\
		      Because $Q_1$ intersects $M_1$, the lower boundary of $Q_1$
		      must have a distance greater than $3/4$ from the top of \unitsq;
		      see Fig.~\ref{fig:11_32_proof_h1_small2}.
		      Hence, $\norm{\packedseq_m} > 1/4\cdot(3/4 + 2/4) = 10/32$.
		      As no $H_1$-square ever touches the left half of \lowerhalf,
		      we must have an area of at least $1/2 \cdot 1/8 = 2/32$ occupied
		      by small squares. In total we get\\
		      $\norm{\seq} \geq \norm{\packedseq_m} + \norm{\packedseq_s}
				  > 10/32 + 2/32 > 11/32$. \qed
	      \end{enumerate}
      \end{enumerate}
     \end{proof}

     \begin{figure}
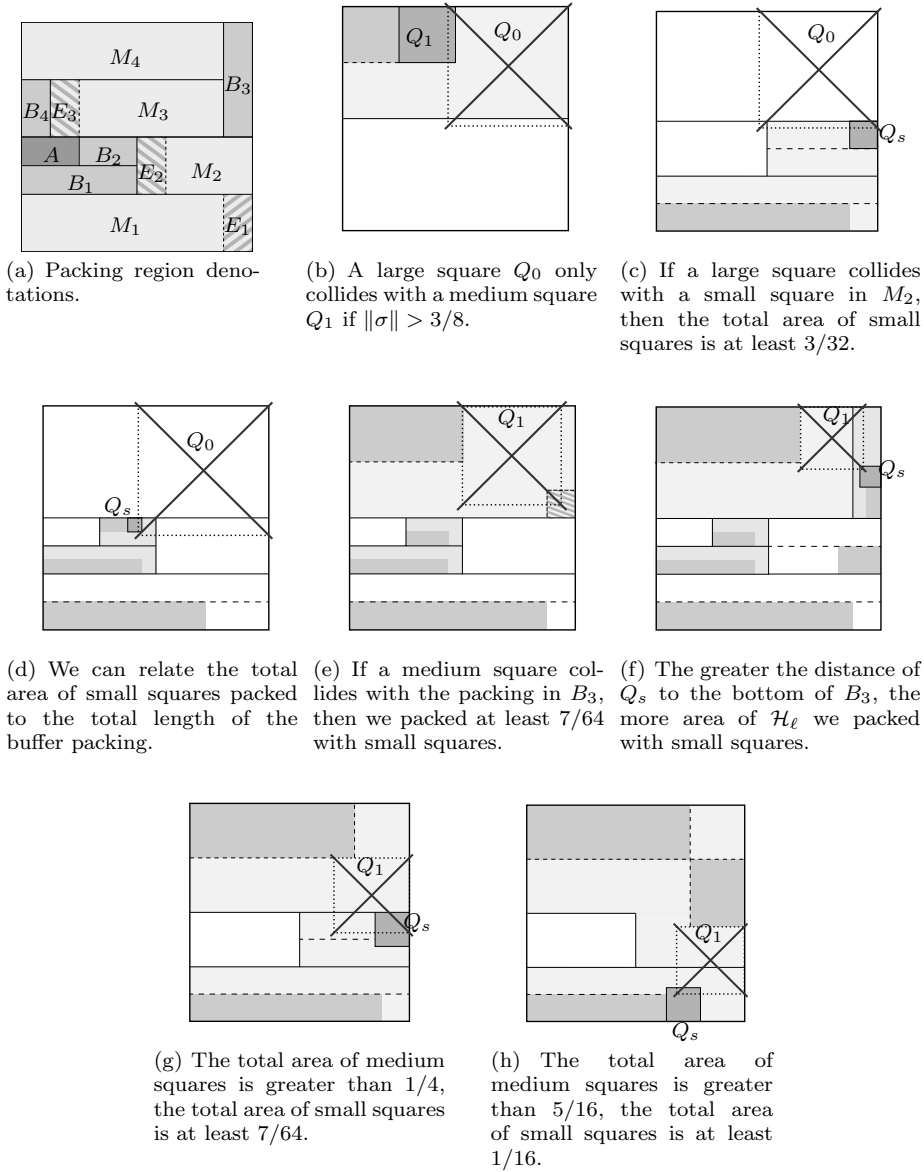

	\centering
	\subfigure[Packing region denotations.]{
	  \def\svgwidth{.25\textwidth}
	  \import{./}{smallPartition1.tex}
	}
	\hskip5mm
	\subfigure[A large square $Q_0$ only collides with a medium square
		    $Q_1$ if $\norm{\seq}>3/8$.]{
	  \def\svgwidth{.29\textwidth}
	  \import{./}{11_32_proof_a.tex}
	  \label{fig:11_32_proof_h1_h0}
	}
	\hskip2mm
	\subfigure[If a large square collides with a small square in $M_2$,
	      then the total area of small squares is at least $3/32$.]{
	  \def\svgwidth{.3\textwidth}
	  \import{./}{11_32_proof_b.tex}
	\label{fig:11_32_proof_h0_small1}
	}
	\subfigure[We can relate the total area of small squares packed to the
		  total length of the buffer packing.]{
	  \def\svgwidth{.29\textwidth}
	  \import{./}{11_32_proof_c.tex}
	\label{fig:11_32_proof_h0_small2}
	}
	\hskip1mm
	\subfigure[If a medium square collides with the packing in $B_3$,
		    then we packed at least 7/64 with small squares.]{
	  \def\svgwidth{.29\textwidth}
	  \import{./}{11_32_proof_d.tex}
	  \label{11_32_proof_h1_b3_1}
	}
	\hskip1mm
	\subfigure[The greater the distance of $Q_s$ to the bottom of $B_3$,
		    the more area of \lowerhalf we packed with small squares.]{
	  \def\svgwidth{.30\textwidth}
	  \import{./}{11_32_proof_e.tex}
	  \label{11_32_proof_h1_b3_2}
	}
	\subfigure[The total area of medium squares is greater than $1/4$,
		      the total area of small squares is at least $7/64$.]{
	  \def\svgwidth{.29\textwidth}
	  \import{./}{11_32_proof_f.tex}
	  \label{fig:11_32_proof_h1_small1}
	}
	\hskip5mm
	\subfigure[The total area of medium squares is greater than $5/16$,
		    the total area of small squares is at least $1/16$.]{
	  \def\svgwidth{.28\textwidth}
	  \import{./}{11_32_proof_g.tex}
	  \label{fig:11_32_proof_h1_small2}
	}
	\caption{Different types of collision that may appear if the total
		area of the input exceeds $11/32$.}
      \end{figure}

This concludes the proof of the main Theorem~\ref{thm:11over32}.

\section{Packing into a Dynamic Container}\label{sec:dynamicContainer}
Now we discuss the problem of online packing a sequence of squares into a dynamic square container. At each stage, the container must be large enough
to accommodate all objects; this requires keeping the container tight early
on, but may require increasing its edge length
appropriately during the process.

In the following, we give a non-trivial family of instances,
which prove that no online algorithm can maintain a packing density greater than
$3/7$ for an arbitrary input sequence of squares and introduce an online square packing
algorithm that maintains a packing density of $1/8$ for an arbitrarily input
sequence of squares.

\subsection{An Upper Bound on $\delta$}\label{sec:dynContUpperBound}
If the total area of the given sequence is unknown in advance, the problem
of finding a dense online packing becomes harder. As it turns
out, a density of $1/2$ cannot be achieved.

  \begin{figure}[ht]
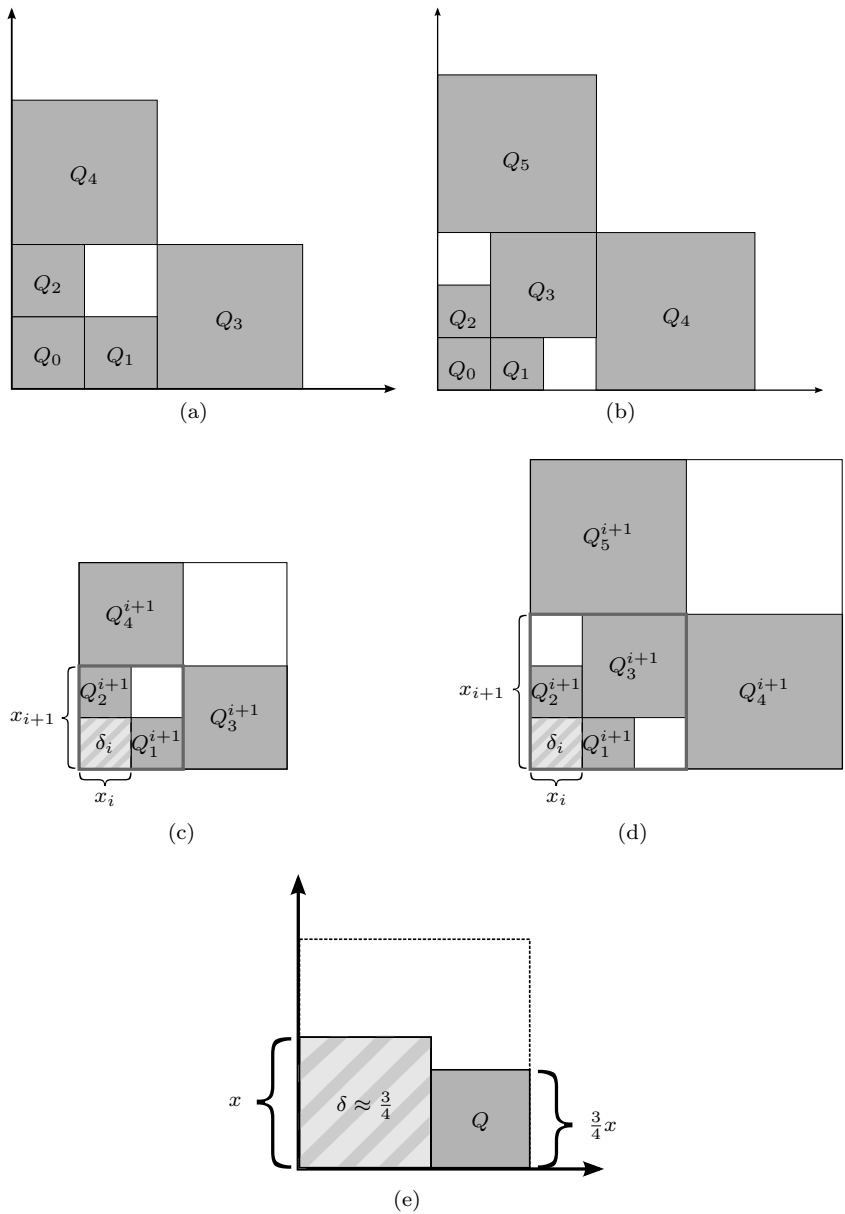

    \centering
    \subfigure[]{
      \def\svgwidth{.42\textwidth}
      \import{./}{cornerP1_2.tex}
      \label{fig:cornerP1_2}
    }
    \hskip 0.1cm
    \subfigure[]{
      \def\svgwidth{.42\textwidth}
      \import{./}{cornerP2_2.tex}
      \label{fig:cornerP2_2}
    }
    \hskip 0.5cm
    \subfigure[]{
      \def\svgwidth{.45\textwidth}
      \import{./}{delta1.tex}
      \label{fig:delta1}
    }
    \hskip 0.05cm
    \subfigure[]{
      \def\svgwidth{.45\textwidth}
      \import{./}{delta2.tex}
      \label{fig:delta2}
    }
    \hskip 0.5cm
    \subfigure[]{
	\def\svgwidth{.50\textwidth}
	\import{./}{qe2.tex}
    }
    \caption{\label{fig:lowdyn}Different choices in the lower-bound sequence:
(a) Packing after choosing corner positions. (b) Packing after choosing a center position.
(c) Recursion parameters for corner positions. (d) Recursion parameters for center position.
(e) Packing a last square.}
 \end{figure}

\begin{theorem}
\label{th:3-7}
    There are sequences for which no deterministic online packing algorithm
    can maintain a density strictly greater than $3/7 \approx 0.4286$. 
\end{theorem}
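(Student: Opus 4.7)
The plan is to exhibit an adaptive adversarial sequence of squares and argue that every deterministic online algorithm, playing against this adversary, is forced at some stage to a packing density of at most $3/7$. Since the adversary may choose each new square in response to the algorithm's previous placements, I would design the sequence step by step, letting the side of each new item be determined by exactly where the algorithm committed its earlier squares. The overall strategy exploits the fundamental dilemma of any online algorithm: it must either pack the current prefix very tightly (risking a disproportionately large enclosing square when a big item arrives next) or keep the enclosing square loose (damaging density immediately).

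First I would present a single square $Q_{1}$ of side $1$; let $c_{1}$ denote the side of the minimum axis-aligned enclosing square after the algorithm places $Q_{1}$. If $c_{1}\ge\sqrt{7/3}$ then the density is already $1/c_{1}^{2}\le 3/7$ and we are done. Otherwise $c_{1}<\sqrt{7/3}$ and the algorithm is committed to a tight corner-style placement. Next I would release a square $Q_{2}$ whose side is chosen just large enough that it cannot be accommodated alongside $Q_{1}$ inside the current enclosing square. Since any two non-overlapping axis-aligned squares of sides $1$ and $s$ require an enclosing square of side at least $1+s$, the density after two steps is $(1+s^{2})/(1+s)^{2}\ge 1/2$, which is still above $3/7$; hence the adversary must continue. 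The decisive next move is a square whose side roughly matches the current container side: such a square cannot reuse any of the thin strips left by the earlier commitments, so it forces the new enclosing square either to stack against the existing block or to be placed diagonally, and in both cases the enclosing square grows disproportionately with respect to the modest additional area. One further adaptive round (whose exact size is chosen to match the algorithm's branch) then brings the worst-case ratio down to precisely $3/7$.

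The main obstacle will be the case analysis. At each step the algorithm has several fundamentally different responses (corner placement, edge-adjacent stacking, diagonal placement), and the adversary's next side length must be tailored to each one, so the proof branches into a small tree of cases. The hardest part is verifying that in every branch the minimum enclosing square really grows by the amount claimed --- in particular that no clever reuse of a previously opened strip allows the algorithm to dodge the growth --- and that the density genuinely bottoms out at exactly $3/7$ rather than at some weaker constant. Carrying this out requires keeping tight simultaneous control over the packed area added by each adversarial square and over the minimum enclosing square that any valid algorithm response can yield at the following step; once this bookkeeping is in place, the bound $\delta\le 3/7$ follows from taking the worst case over the tree.
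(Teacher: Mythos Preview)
Your sketch is missing the mechanism that actually produces the constant $3/7$. The paper's proof is not a bounded case tree over three or four squares; it is an unbounded iterative process. At every stage the adversary hands the algorithm a square and watches whether it is placed in a ``corner'' position or a ``center'' position of the current spanning box. In the first case the adversary answers with another square of the same size; in the second case with two squares, each the size of the current spanning box. Iterating this drives the density of the current packing (measured inside its spanning box) toward a limit: $2/3$ if the algorithm always chooses corners, $3/4$ if it always chooses centers, and something in between otherwise. Only after the density has been pushed arbitrarily close to $3/4$ with the center occupied does the adversary release one last square, of side $3/4$ of the current box, forcing the container to side $7/4$ and the density to $\frac{3/4+(3/4)^2}{(7/4)^2}=3/7$.

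Your outline, by contrast, tries to finish in roughly four rounds. But after two squares the density is at least $1/2$ in every branch (you compute this yourself), and after three it is still well above $3/7$; a single ``further adaptive round'' cannot bring it down to exactly $3/7$ unless the preceding configuration already has density essentially $3/4$ with the center blocked --- and manufacturing that configuration is precisely what the recursion does. The sentence ``one further adaptive round \ldots brings the worst-case ratio down to precisely $3/7$'' is where all the content of the proof would have to live, and it is left as an assertion with no supporting computation. The real work is the convergence analysis of the corner/center recursion, together with the observation that the final $3/4$-sized square cannot be absorbed into any hole of the limiting configuration; neither ingredient appears in your plan.
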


\begin{proof}
We construct an appropriate sequence of squares, depending
on what choices a deterministic player makes; 
see Fig.~\ref{fig:lowdyn}. At each stage,
the player must place a square $Q_3$ into a corner position
(Fig.~\ref{fig:lowdyn}(a)) or into a center position (Fig.~\ref{fig:lowdyn}(b));
the adversary responds by either requesting another square of the same size (a),
or two of the size of the current spanning box.
This is repeated. 

If the player keeps choosing corner positions, 
the density $\delta_i$ for the enclosing square of size $x_i$ satisfies
the recursion $\delta_{i+1}=1/4\cdot\delta_i+1/2$, as shown in Fig.~\ref{fig:lowdyn}(d).
The sequence is decreasing and bounded from below, so solving the equation
$\delta_{\infty}=1/4\cdot\delta_{\infty}+1/2$ yields $\lim_{i\to \infty} \delta_i = \delta_{\infty} = 2/3$.
If the player keeps choosing center positions, 
the density $\delta_i$ for the enclosing square of size $x_i$ satisfies
the recursion $\delta_{i+1}=1/9\cdot\delta_i+2/3$, as shown in Fig.~\ref{fig:lowdyn}(e).
This sequence is also decreasing and bounded from below, so solving the equation
$\delta_{\infty}=1/9\cdot\delta_{\infty}+2/3$ yields $\lim_{i\to \infty} \delta_i = \delta_{\infty} = 3/4$.
For mixed choices, the density lies in between. Therefore the opponent can force the density 
below $3/4+\varepsilon$, for any $\varepsilon>0$. Once that is the case, 
with the center position occupied, the adversary can request
a final square of size $3/4\cdot x$, where $x$ is the size of the current spanning box.
The resulting density is arbitrarily close to
$\frac{3/4\cdot x^2+(3/4\cdot x)^2}{(x+ 3/4\cdot x)^2}=3/7$.
If the center position does not get occupied, the density is even worse. \qed
\end{proof}

It is an easy consequence of continuity that
this upper bound can be lowered by a very small amount by slightly decreasing the value for the center case,
while increasing the value for the corner case, until they are balanced. 
More specifically, we can decrease the density for the center case by increasing the square sizes by more than a factor of 2
at each step of the recursion. When only focusing on the center case, the best such factor is $1+\sqrt{3}$, for an asymptotic density of $\sqrt{3}-1=0.73204\ldots$,
yielding a resulting final density of $0.42265\ldots$ as a lower bound for the achievable value. However, this is much beyond what can actually be achieved when
also accounting for the corner case: the bounding box for each iteration
becomes a rectangle, so the worst-case density for the corner case increases quite rapidly. This 
keeps the total upper bound for the final density much closer to $3/7=0.42857\ldots$. 
As a consequence, we omit the tedious computations for the resulting tiny improvement.

\subsection{A Lower Bound on $\delta$}\label{sec:dynContLowerBounds} 
When placing squares into a dynamic container, we cannot use our 
Recursive Shelf Algorithm, as it requires allocating shelves from all four
container boundaries, which are not known in advance.
However, we can adapt the 
Brick Algorithm by~\cite{jl97}, which we describe in the following.
 
The method is based on a partition of the unit square into \emph{bricks}.
Bricks are rectangles with aspect ratio $\sqrt{2}$ (or $1/\sqrt{2}$),
which are well-known from the international ISO 216 paper formats, in particular the common A series.
The most important property of these rectangles is that by bisecting a brick
with dimensions $(b, b/\sqrt{2})$, we create two new smaller bricks of size
$(b/2, b/\sqrt{2})$.  This way, we can construct bricks with side lengths
$b2^{-k/2}$ and $b2^{(-k-1)/2}$ for any $k = 0, 1, \ldots$ via a recursive
bisection.  All of the bricks created this way are called \emph{subbricks} of
$B$.  For any square $Q$ let $S_b(Q)$ denote the smallest brick with side
lengths $b/(\sqrt2)^{k}$ and $b/(\sqrt2)^{k+1}$that may contain $Q$.
Obviously, there is some space left if $Q$ is packed into $S_b(Q)$.
Independent of the base $b$ side length we can bound this free space as follows.
\begin{figure}
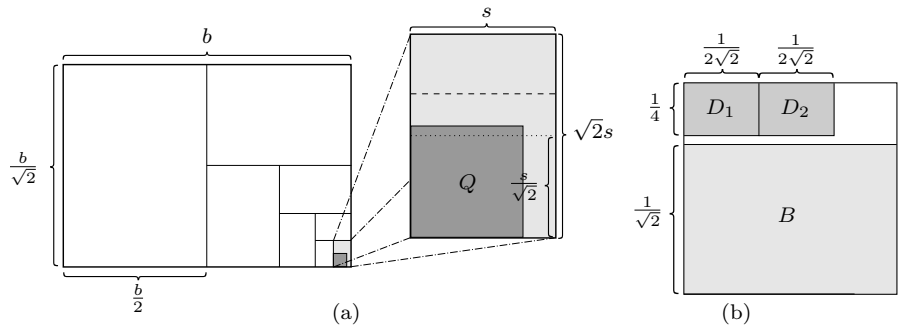

  \centering
  \subfigure[]{
	\hspace{-2cm}
	\def\svgwidth{.75\textwidth}
	\import{./}{brickConcept.tex}
	 \label{fig:brickConcept}
	}
	\hspace{-1cm}
  \subfigure[]{
  	\def\svgwidth{.33\textwidth}
	\import{./}{brickPartition.tex}
 	 \label{fig:brickPartition}
	}
  \caption[Subdivision Scheme of the Brick Concept.]{(a) Subdivision Scheme of the Brick Concept:
	    (Left) The recursive bisection of a brick $B$.
	    (Right) A brick (equal to $S_B(Q)$) occupied by a square $Q$; the dashed line marks the upper bound, the dotted line the lower bound on the possible side lengths of $Q$.
	(b) Partition of the unit square \unitsq used by Januszewski and Lassak.
}
\end{figure}

\begin{lem}\label{lem:brickdensity}
 Let $Q$ be a square and $b$ be a real number. Then $\frac{\norm{S_b(Q)}}{2\sqrt2} < \norm{Q} \leq \frac{\norm{S_b(Q)}}{\sqrt2}$.
\end{lem}
\begin{proof}
  Let $s$ and $\sqrt{2}s$ be the side lengths of $S_B(Q)$.
  By definition of $S_b(Q)$, we have
	$$\frac{s}{\sqrt{2}} = \frac{b}{(\sqrt2)^{k+1}} < x \leq \frac{b}{(\sqrt2)^k} =: s$$
  for some $k \in \mathbb{N}$. With $\norm{S_b(Q)} = s \cdot \sqrt{2}s = \sqrt{2}s^2$ we get
    $$\frac{\norm{S_b(Q)}}{2\sqrt2} =  \frac{\sqrt{2}s^2}{2\sqrt2}= \left(\frac{s}{\sqrt{2}}\right)^2 < x^2 = \norm{Q}$$
    $$\text{ and } \qquad\norm{Q} = x^2 \leq s^2 = \frac{\norm{S_b(Q)}}{\sqrt2}$$
\end{proof}
In other words, with a strategy that packs squares into their respective smallest subbrick,
we cannot hope to generate a packing density higher than $1/(2\sqrt2)$.
We denote the bricks that contain a square \emph{occupied}. All other bricks are called \emph{free}.
%
Based on this subdivision, Januszewski and Lassak developed a recursive packing algorithm, which they call \emph{the method of the first free fitting subbrick}.
They first construct three bricks in the unit square as shown in Fig.~\ref{fig:brickPartition}.
Then they pack each square $Q$ into a brick congruent to $S_1(Q)$ after recursively subdividing the smallest free brick that can accomodate $Q$. 

\vskip 1em
\begin{minipage}{\textwidth}
  \textbf{The Brick Algorithm:}
  \begin{enumerate}
    \item Construct the bricks 
	\begin{eqnarray*}
	  B\,	&=&	\{(x_1,x_2): 0 \leq x_1 \leq 1,\,0 \leq x_2 \leq \sqrt{2}\},\\
	  D_1	&=&	\{(x_1,x_2): 0 \leq x_1 \leq 2^{-3/2}, 3/4 \leq x_2 \leq 1\}\quad\text{and}\\
	  D_2	&=&	\{(x_1,x_2): 2^{-3/2} \leq x_1 \leq 2\cdot2^{-3/2}, 3/4 \leq x_2 \leq 1\}
	\end{eqnarray*}
    \item For each incoming square $Q$:
    \begin{enumerate}
	\item Let \newbuffer be the first base brick in the order $D_1$, $D_2$, $B$ that has a free subbrick $\newbuffer'$ of size greater or equal to $S_1(Q)$.
	\item If $\norm{\newbuffer'} = \norm{S_1(Q)}$, then pack $Q$ into $\newbuffer'$.
	\item Otherwise, recursively bisect (one half of) the smallest subbrick of $\newbuffer'$ until a brick $\newbuffer''$ of size $S_1(Q)$ is created.
	\item Pack $Q$ into $\newbuffer''$.
    \end{enumerate}
  \end{enumerate}
\end{minipage}
\vskip 1em

We call the bricks $B$, $D_1$ and $D_2$ the \emph{base bricks}.
Note that all three base bricks have side length equal to a power of $\sqrt{2}$.
That is, all (sub-)bricks created by the algorithm have only side lengths equal to a power of $\sqrt{2}$, too.

In order to adapt this approach to our setting (with increasing instead of decreasing brick size), we keep
some properties, but adjust others.
We still consider bricks with side lengths equal to a power of $\sqrt{2}$
(and aspect ratio $1/\sqrt{2}$ or $\sqrt{2}$).
We let $B_k$ denote the brick of size $({\sqrt{2}}^k,{\sqrt{2}}^{k+1})$
and let $S(Q)$ denote the smallest brick $B_i$ that may contain a given square $Q$.

There are two crucial modifications: 
(1) The first square $Q$ is packed into a brick of size $S(Q)$ with its lower left
corner in the origin and (2) instead of always subdividing the
existing bricks (starting with three fixed ones), we may repeatedly
double the current maximum existing brick \maxBrick to make room for large
incoming squares. Apart from that, we keep the same packing scheme:
Place each square $Q$ into (a subbrick of) the smallest free brick
that can contain $Q$; see Fig.~\ref{fig:dynBrickCases} for an
illustration.

\begin{figure}[t]
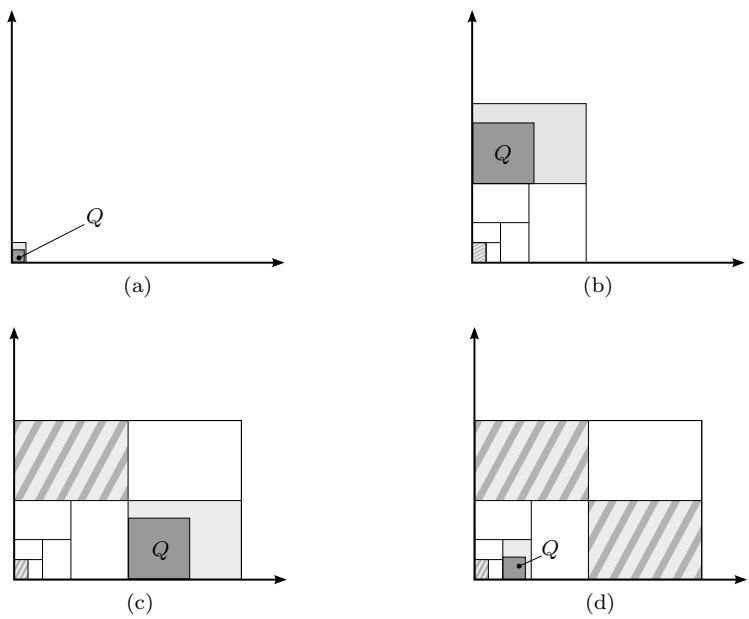

  \centering
  \hskip 5mm
  \subfigure[]{
    \def\svgwidth{.30\textwidth}
    \import{./}{dynamicBrick1.tex}
    \label{fig:dynBrick1}
  }
  \hskip 2cm
  \subfigure[]{
    \def\svgwidth{.30\textwidth}
    \import{./}{dynamicBrick2.tex}
    \label{fig:dynBrick2}
  }
  \newline
  \subfigure[]{
    \def\svgwidth{.30\textwidth}
    \import{./}{dynamicBrick3.tex}
    \label{fig:dynBrick3}
  }
  \hskip 2cm
  \subfigure[]{
    \def\svgwidth{.30\textwidth}
    \import{./}{dynamicBrick4.tex}
    \label{fig:dynBrick4}
  }
  \caption{The modified Brick-Packing algorithm
for an input square $Q$. Occupied bricks are hatched, 
free bricks are blank. (a) A first square gets placed
          into the lower left corner, $\maxBrick = S(Q)$.
(b) If $S(Q)>$\maxBrick, 
            we double \maxBrick until $Q$ fits. (c) If $Q$ does not fit into \maxBrick, but $\norm{S(Q)} < \norm{\maxBrick}$, we double
$\maxBrick$ and subdivide the resulting brick.
(d) If $Q$ fits into \maxBrick,
            we pack it into the smallest free fitting subbrick.}
  \label{fig:dynBrickCases}
\end{figure}

\begin{theorem}
\label{th:1-8}
  For any input sequence of squares, the Dynamic Brick Algorithm maintains
  a packing density of at least $1/8$.
\end{theorem}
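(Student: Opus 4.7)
The plan is induction on the number of squares inserted, maintaining the invariant that after each placement the total packed area $P$ satisfies $P \geq \norm{\maxBrick}/8$, which is equivalent to the theorem. The base case is immediate: the first square $Q_1$ is placed into $\maxBrick = S(Q_1)$ and already contributes more than $\norm{\maxBrick}/(2\sqrt{2})>\norm{\maxBrick}/8$.

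For the inductive step I would case-split on the four panels of Fig.~\ref{fig:dynBrickCases}. Panel (d) preserves $\maxBrick$ and only increases $P$, so the invariant is trivial. In the doubling panels (b) and (c), writing $B_n$ and $B_m$ for $\maxBrick$ before and after, the requirement becomes $P+\norm{Q}\geq \norm{B_m}/8 = 2^{m-n}\norm{B_n}/8$. In panel (b) the chain of doublings stops at $m=S(Q)+1$, so $\norm{Q}>2^{m-2}=\norm{B_m}/(4\sqrt{2})$ already beats $\norm{B_m}/8$. In panel (c) we double only once, so $m=n+1$ and the bound reduces to $P+\norm{Q}\geq\norm{B_n}/4$. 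The cases $S(Q)\in\{B_n,B_{n-1}\}$ are routine: combining $\norm{Q}>2^{n-2}$ with the inductive $P\geq\norm{B_n}/8$ clears $\norm{B_n}/4$ by a short computation.

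The hard part is the remaining subcase of panel (c) with $S(Q)\leq B_{n-2}$, where $\norm{Q}$ is too small to close the gap by itself. The plan here is to exploit the saturation condition forced by the decision to double: no free subbrick of $B_n$ has size at least $S(Q)$, so every brick of size at least $S(Q)$ in the subdivision tree of $B_n$ is either an occupied leaf, contributing at least $\norm{B_i}/(2\sqrt{2})$ within its own brick, or is internally subdivided and must therefore contain a previously placed square. I would walk this tree level by level, using that the algorithm always picks the smallest free fitting brick—so small squares preferentially fill existing free siblings of an already-subdivided chain rather than spreading thinly across many untouched tiles—to accumulate enough packed area to reach $\norm{B_n}/4$. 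This tree bookkeeping, ruling out degenerate configurations in which many level-$S(Q)$ tiles each contain only a single vanishingly small square, is the main obstacle; the placement rule is precisely what prevents such configurations, and making this formal will likely require a separate saturation lemma. Once that lemma is in place, all four cases give $P+\norm{Q}\geq\norm{B_m}/8$ and the induction closes. \qed
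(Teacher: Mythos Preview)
Your invariant is not equivalent to the theorem. The container is a \emph{square} enclosing $\maxBrick$, and since a brick has aspect ratio $\sqrt{2}$, the smallest enclosing square has area $\sqrt{2}\,\norm{\maxBrick}$. Hence packing density $\geq 1/8$ means $P \geq \sqrt{2}\,\norm{\maxBrick}/8 = \norm{\maxBrick}/(4\sqrt{2})$, not $P \geq \norm{\maxBrick}/8$; your claimed equivalence is off by a factor $\sqrt{2}$, and the bound you are inducting on would only give density $1/(8\sqrt{2})$.

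More fundamentally, the induction you set up cannot close on its own, and the missing ``saturation lemma'' is essentially the whole proof. In your hard subcase of panel~(c) the new square contributes almost nothing, so the inductive hypothesis $P\geq c\,\norm{B_n}$ (for any constant $c$) is genuinely too weak to reach $2c\,\norm{B_n}$; what you need is a \emph{structural} fact about the free-brick pattern that is independent of the running area bound. The paper obtains exactly this from~\cite{jl97}: the free bricks always have pairwise distinct sizes, which forces the total free area to be at most $\norm{\maxBrick}/2$ at every step. Once you have that, the case analysis on panels~(a)--(d) is unnecessary: occupied bricks cover at least half of $\maxBrick$, each occupied brick has density $>1/(2\sqrt{2})$, and the three-factor product $(1/(2\sqrt{2}))\cdot(1/2)\cdot(1/\sqrt{2})=1/8$ finishes immediately. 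Your tree-walking plan is heading toward rediscovering this invariant, but you should prove and use it directly rather than wrapping it inside an area induction that it renders superfluous.
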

\begin{proof}
  By construction, every occupied brick has a density of at least $1/(2\sqrt2)$.
  It is easy to see that in every step of the algorithm
  at most half the area of \maxBrick consists of free bricks; compare \cite{jl97}.
  Because \maxBrick always contains all occupied bricks (and thus all packed squares),
  the ratio of \norm{\maxBrick} to the area of the smallest enclosing square is at least
  $1/\sqrt2$. Therefore, the algorithm maintains an overall density of at least
  $(1/(2\sqrt2))\cdot(1/2)\cdot(1/\sqrt2)=1/8$. \qed
\end{proof}

\subsection{Minimizing Container Size}
The above results consider the worst-case ratio for the packing density.
A closely related question is the online optimization problem of maintaining
a square container with minimum edge length. The following is an easy consequence
of Theorem~\ref{th:1-8}, as a square of edge length $2\sqrt{2}$ can accommodate
a unit area when packed with density 1/8.
By considering optimal offline packings for the class of examples constructed
in Theorem~\ref{th:3-7}, it is straightforward to get a lower bound of 4/3 for any
deterministic online algorithm.

\begin{corollary}\label{cor:ub-edgelength}
  Dynamic Brick Packing provides a competitive factor of $2\sqrt{2}=2.82\ldots$
  for packing an online sequence of squares into a square container with small
  edge length. The same problem has a lower bound of 4/3 for the competitive factor.
\end{corollary}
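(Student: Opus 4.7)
The plan is to split the corollary into two independent claims. For the upper bound of $2\sqrt{2}$ on the competitive factor, the argument follows almost immediately from Theorem~\ref{th:1-8}. At any moment, let $A$ denote the total area of the squares packed so far. Since the Dynamic Brick Algorithm maintains density at least $1/8$, the online bounding square has area at most $8A$, and hence its edge length is at most $2\sqrt{2}\sqrt{A}$. Any square container (online or offline) must have edge length at least $\sqrt{A}$ in order to accommodate total area $A$, so the competitive ratio is bounded by $2\sqrt{2}$ at every step of the sequence. Because this estimate holds pointwise in time, it is in particular a valid competitive factor.

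For the $4/3$ lower bound, my plan is to reuse the adversarial sequences constructed in the proof of Theorem~\ref{th:3-7}. Recall that, after repeatedly forcing the player to commit to center positions until the online packing reaches density close to $3/4$ inside a spanning square of side $s$, the adversary delivers a final square of side $3s/4$; the online bounding box is then forced to be a square of side $7s/4$, and contains roughly $(21/16)s^2$ of packed square area. The key step is to exhibit an offline packing of the \emph{same} sequence inside a square of side $21s/16=(3/4)(7s/4)$, which immediately yields the desired ratio
\begin{equation*}
\frac{7s/4}{21s/16}=\frac{4}{3}.
\end{equation*}
Concretely, I would place the final size-$3s/4$ square in one corner of a $21s/16$ container, leaving an L-shape of area $297s^2/256$; this comfortably exceeds the $(3/4)s^2=192s^2/256$ of area that still has to be placed, and the largest remaining square also fits into the two arms of the L.

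The hard part will be verifying the geometric feasibility of this offline fitting. While the area count and the size of the largest square check out directly, one must argue that the specific sizes produced by the recursive adversarial construction can indeed be arranged to tile the L-shape. The plan is to exploit the self-similarity of the recurring pattern from Theorem~\ref{th:3-7}, proceeding by induction on the number of recursion rounds and showing that every additional round contributes squares that fit into a proportionally larger L-shape inside an offline container of the required size. Since the area count leaves significant slack, if the precise target $21s/16$ turns out to be technically awkward, a slightly larger offline target still yields a competitive ratio of at least $4/3$, so a weaker but cleaner explicit construction would still suffice.
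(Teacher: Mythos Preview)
Your approach matches the paper's: the upper bound is derived directly from Theorem~\ref{th:1-8} (the paper states this in one sentence), and the lower bound comes from analyzing optimal offline packings for the adversarial sequences of Theorem~\ref{th:3-7} (again stated in one sentence, without the explicit construction you attempt). Your upper-bound argument is correct and complete.

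For the lower bound, however, your closing hedge is logically backwards. You write that ``a slightly larger offline target still yields a competitive ratio of at least $4/3$,'' but the opposite is true: to obtain a lower bound of $4/3$ on the competitive ratio you need the online edge length divided by the optimal offline edge length to be at least $4/3$, so with the online box forced to side $7s/4$ you must exhibit an offline packing into a container of side \emph{at most} $21s/16$. Any offline container larger than $21s/16$ produces a \emph{smaller} ratio and hence a \emph{weaker} lower bound. There is no slack in that direction; the area surplus you correctly note ($297s^2/256$ available versus $192s^2/256$ needed) only helps make the tight offline construction feasible --- it does not let you relax the target side length. So the fallback you propose does not exist, and you genuinely have to carry through the inductive geometric construction you sketch in order to reach $4/3$.
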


\section{Conclusion}\label{sec:conclusion}
We have presented progress on two natural variants
of packing squares into a square in an online fashion.
The most immediate open question remains the critical
packing density for a fixed container, where
the correct value may actually be less than $1/2$.
Even though we invested a considerable amount of work into establishing
a lower bound greater than $1/3$,
we believe that there are alternative schemes
that could lead to further improvement. 

Online packing into a dynamic container
remains wide open. There is still possible slack in both bounds; our feeling is
that it should be easier to improve the lower bound rather than the upper bound, 
as there is still considerable room to employ more sophisticated recursive schemes,
just like in the case of a fixed container.

There are many interesting related questions.
What is the critical density (offline and online)
for packing circles into a unit square?
This was raised by Demaine et al.~\cite{dfl-cpodi-11}. 
In an offline setting, there is a lower bound of $\pi/8=0.392\ldots$, and an 
upper bound of $\frac{2\pi}{(2+\sqrt{2})^2}=0.539\ldots$, 
which is conjectured
to be tight. Another question is to consider the critical density
as a function of the size of the largest object. 
In an offline context,
the proof by Moon and Moser provides an answer, but little is known
in an online setting.

%

\section*{Acknowledgement}
We thank the anonymous reviewers for many helpful comments that improved the overall manuscript.


\bibliographystyle{spmpsci}      

 \bibliography{lit}

\begin{thebibliography}{10}
\providecommand{\url}[1]{{#1}}
\providecommand{\urlprefix}{URL }
\expandafter\ifx\csname urlstyle\endcsname\relax
  \providecommand{\doi}[1]{DOI~\discretionary{}{}{}#1}\else
  \providecommand{\doi}{DOI~\discretionary{}{}{}\begingroup
  \urlstyle{rm}\Url}\fi

\bibitem{ae-strip97}
Azar, Y., Epstein, L.: On two dimensional packing.
\newblock Journal of Algorithms \textbf{25}(2), 290--310 (1997)

\bibitem{bcj-struct-09}
Bansal, N., Caprara, A., Jansen, K., Pr{\"{a}}del, L., Sviridenko, M.: A
  structural lemma in 2-dimensional packing, and its implications on
  approximability.
\newblock In: Algorithms and Computation, 20th International Symposium,
  {ISAAC}, pp. 77--86 (2009)

\bibitem{bcs-bin06}
Bansal, N., Caprara, A., Sviridenko, M.: Improved approximation algorithms for
  multidimensional bin packing problems.
\newblock In: 47th Annual {IEEE} Symposium on Foundations of Computer Science
  {(FOCS)}, pp. 697--708 (2006)

\bibitem{bcs-namsc-09}
Bansal, N., Caprara, A., Sviridenko, M.: A new approximation method for set
  covering problems, with applications to multidimensional bin packing.
\newblock {SIAM} Journal on Computing \textbf{39}(4), 1256--1278 (2009)

\bibitem{bck-bpmd-06}
Bansal, N., Correa, J.R., Kenyon, C., Sviridenko, M.: Bin packing in multiple
  dimensions: Inapproximability results and approximation schemes.
\newblock Mathematics of Operations Research \textbf{31}(1), 31--49 (2006)

\bibitem{bhi-sicomp13}
Bansal, N., Han, X., Iwama, K., Sviridenko, M., Zhang, G.: A harmonic algorithm
  for the 3d strip packing problem.
\newblock {SIAM} Journal on Computing \textbf{42}(2), 579--592 (2013)

\bibitem{bk-soda14}
Bansal, N., Khan, A.: Improved approximation algorithm for two-dimensional bin
  packing.
\newblock In: Proceedings of the Twenty-Fifth Annual {ACM-SIAM} Symposium on
  Discrete Algorithms ({SODA}), pp. 13--25 (2014)

\bibitem{bls-focs05}
Bansal, N., Lodi, A., Sviridenko, M.: A tale of two dimensional bin packing.
\newblock In: 46th Annual {IEEE} Symposium on Foundations of Computer Science
  {(FOCS)}, pp. 657--666 (2005)

\bibitem{bs-soda04}
Bansal, N., Sviridenko, M.: New approximability and inapproximability results
  for 2-dimensional bin packing.
\newblock In: Proceedings of the Fifteenth Annual {ACM-SIAM} Symposium on
  Discrete Algorithms ({SODA}), pp. 196--203 (2004)

\bibitem{harmony}
Caprara, A.: Packing 2-dimensional bins in harmony.
\newblock In: 43rd Symposium on Foundations of Computer Science {(FOCS}), pp.
  490--499 (2002)

\bibitem{c-stages08}
Caprara, A.: Packing \emph{d}-dimensional bins in \emph{d} stages.
\newblock Mathematics of Operations Research \textbf{33}(1), 203--215 (2008)

\bibitem{clm-small06}
Caprara, A., Lodi, A., Martello, S., Monaci, M.: Packing into the smallest
  square: Worst-case analysis of lower bounds.
\newblock Discrete Optimization \textbf{3}(4), 317--326 (2006)

\bibitem{clm-bin05}
Caprara, A., Lodi, A., Monaci, M.: Fast approximation schemes for two-stage,
  two-dimensional bin packing.
\newblock Mathematics of Operations Research \textbf{30}(1), 150--172 (2005)

\bibitem{c-aug-06}
Correa, J.R.: Resource augmentation in two-dimensional packing with orthogonal
  rotations.
\newblock Operations Research Letters \textbf{34}(1), 85--93 (2006)

\bibitem{ck-soda04}
Correa, J.R., Kenyon, C.: Approximation schemes for multidimensional packing.
\newblock In: Proceedings of the Fifteenth Annual {ACM-SIAM} Symposium on
  Discrete Algorithms {(SODA)} 2004, pp. 186--195 (2004)

\bibitem{dfl-cpodi-11}
Demaine, E.D., Fekete, S.P., Lang, R.J.: Circle packing for origami design is
  hard.
\newblock In: Origami5, pp. 609--626. AK Peters/CRC Press (2011)

\bibitem{es-soda04}
Epstein, L., van Stee, R.: Optimal online bounded space multidimensional
  packing.
\newblock In: Proceedings of the Fifteenth Annual {ACM-SIAM} Symposium on
  Discrete Algorithms, {SODA} 2004, New Orleans, Louisiana, USA, January 11-14,
  2004, pp. 214--223 (2004)

\bibitem{es05}
Epstein, L., van Stee, R.: Online square and cube packing.
\newblock Acta Informatica \textbf{41}(9), 595--606 (2005)

\bibitem{es07}
Epstein, L., van Stee, R.: Bounds for online bounded space hypercube packing.
\newblock Discrete Optimization \textbf{4}(2), 185--197 (2007)

\bibitem{fh-ossp-13}
Fekete, S.P., Hoffmann, H.F.: Online square-into-square packing.
\newblock In: 16th International Workshop on Approximation Algorithms for
  Combinatorial Optimization Problems Proceedings {(APPROX)}, \emph{LNCS}, vol.
  8096, pp. 126--141 (2013)

\bibitem{fks-osp-09}
Fekete, S.P., Kamphans, T., Schweer, N.: Online square packing.
\newblock In: 11th International Symposium on Algorithms and Data Structures
  {(WADS)}, \emph{LNCS}, vol. 5664, pp. 302--314. Springer Berlin Heidelberg
  (2009)

\bibitem{fks-ospg-14}
Fekete, S.P., Kamphans, T., Schweer, N.: Online square packing with gravity.
\newblock Algorithmica \textbf{68}, 1019--1044 (2014)

\bibitem{fgjs05}
Fishkin, A.V., Gerber, O., Jansen, K., Solis-Oba, R.: Packing weighted
  rectangles into a square.
\newblock In: Mathematical Foundations of Computer Science, International
  Symposium {(MFCS)}, \emph{LNCS}, vol. 3618, pp. 352--363 (2005)

\bibitem{fgj-aug08}
Fishkin, A.V., Gerber, O., Jansen, K., Solis{-}Oba, R.: On packing rectangles
  with resource augmentation: Maximizing the profit.
\newblock Algorithmic Operations Research \textbf{3}(1), 1--12 (2008)

\bibitem{hiz08}
Han, X., Iwama, K., Zhang, G.: Online removable square packing.
\newblock Theory of Computing Systems \textbf{43}(1), 38--55 (2008)

\bibitem{h09}
Harren, R.: Approximation algorithms for orthogonal packing problems for
  hypercubes.
\newblock Theoretical Computer Science \textbf{410}(44), 4504--4532 (2009)

\bibitem{phdharren}
Harren, R.: Two-dimensional packing problems.
\newblock Ph.D. thesis, Saarland University (2010)

\bibitem{hjp-bin13}
Harren, R., Jansen, K., Pr{\"{a}}del, L., Schwarz, U.M., van Stee, R.: Two for
  one: Tight approximation of 2d bin packing.
\newblock International Journal of Foundations of Computer Science
  \textbf{24}(8), 1299--1328 (2013)

\bibitem{hjp-strip14}
Harren, R., Jansen, K., Pr{\"{a}}del, L., van Stee, R.: A {(5/3} +
  {\(\epsilon\)})-approximation for strip packing.
\newblock Comput. Geom. \textbf{47}(2), 248--267 (2014)

\bibitem{h11}
Hougardy, S.: On packing squares into a rectangle.
\newblock Computational Geometry: Theory and Applications \textbf{44}(8),
  456--463 (2011)

\bibitem{jp-soda13}
Jansen, K., Pr{\"{a}}del, L.: New approximability results for two-dimensional
  bin packing.
\newblock In: Proceedings of the Twenty-Fourth Annual {ACM-SIAM} Symposium on
  Discrete Algorithms, {(SODA)}, pp. 919--936 (2013)

\bibitem{js-mfcs07}
Jansen, K., Solis{-}Oba, R.: New approximability results for 2-dimensional
  packing problems.
\newblock In: Mathematical Foundations of Computer Science, International
  Symposium {(MFCS)}, \emph{LNCS}, vol. 4708, pp. 103--114 (2007)

\bibitem{js-ptas08}
Jansen, K., Solis{-}Oba, R.: A polynomial time approximation scheme for the
  square packing problem.
\newblock In: Integer Programming and Combinatorial Optimization, 13th
  International Conference {(IPCO)}, pp. 184--198 (2008)

\bibitem{js-aug09}
Jansen, K., Solis{-}Oba, R.: Rectangle packing with one-dimensional resource
  augmentation.
\newblock Discrete Optimization \textbf{6}(3), 310--323 (2009)

\bibitem{js-strip-05}
Jansen, K., van Stee, R.: On strip packing with rotations.
\newblock In: Proceedings of the 37th Annual {ACM} Symposium on Theory of
  Computing {(STOC)}, pp. 755--761 (2005)

\bibitem{jz-profit07}
Jansen, K., Zhang, G.: Maximizing the total profit of rectangles packed into a
  rectangle.
\newblock Algorithmica \textbf{47}(3), 323--342 (2007)

\bibitem{jl97}
Januszewski, J., Lassak, M.: On-line packing sequences of cubes in the unit
  cube.
\newblock Geometriae Dedicata \textbf{67}(3), 285--293 (1997)

\bibitem{kr-focs96}
Kenyon, C., R{\'{e}}mila, E.: Approximate strip packing.
\newblock In: 37th Annual Symposium on Foundations of Computer Science
  {(FOCS)}, pp. 31--36 (1996)

\bibitem{kk70}
Kleitman, D., Krieger, M.: Packing squares in rectangles {I}.
\newblock Annals of the New York Academy of Sciences \textbf{175}, 253--262
  (1970)

\bibitem{kk75}
Kleitman, D.J., Krieger, M.M.: An optimal bound for two dimensional bin
  packing.
\newblock In: 16th Annual Symposium on Foundations of Computer Science (FOCS),
  pp. 163--168 (1975)

\bibitem{ltwyc90}
Leung, J.Y.T., Tam, T.W., Wong, C.S., Young, G.H., Chin, F.Y.L.: Packing
  squares into a square.
\newblock Journal of Parallel and Distributed Computing \textbf{10}(3),
  271--275 (1990)

\bibitem{mm68}
Meir, A., Moser, L.: On packing of squares and cubes.
\newblock Journal of Combinatorial Theory \textbf{5}(2), 126--134 (1968)

\bibitem{mm67}
Moon, J., Moser, L.: Some packing and covering theorems.
\newblock Colloquium mathematicum \textbf{17}, 103--110 (1967)

\bibitem{m66}
Moser, L.: Poorly formulated unsolved problems of combinatorial geometry.
\newblock Mimeographed (1966)

\bibitem{n95}
Novotn{\'y}, P.: A note on a packing of squares.
\newblock Stud. Univ. Transp. Commun. Žilina Math.-Phys. Ser. \textbf{10},
  35--39 (1995)

\bibitem{n96}
Novotn{\'y}, P.: On packing of squares into a rectangle.
\newblock Arch. Math. (Brno) \textbf{32}(2), 75--83 (1996)

\bibitem{zcchtt10}
Zhang, Y., Chen, J.C., Chin, F.Y.L., Han, X., Ting, H.F., Tsin, Y.H.: Improved
  online algorithms for 1-space bounded 2-dimensional bin packing.
\newblock In: 21st International Symposium on Algorithms and Computation
  {(ISAAC)}, \emph{LNCS}, vol. 6507, pp. 242--253 (2010)

\end{thebibliography}



\end{document}